\renewcommand*{\backref}[1]{}
\renewcommand*{\backrefalt}[4]{[%
    \ifcase #1 Not cited.%
          \or p.~#2%
          \else p.~#2%
    \fi%
    ]}
\newtheorem{theorem}{Theorem}
\theoremstyle{definition}
\theoremstyle{property}
\theoremstyle{proof}
\theoremstyle{definition}
\newtheorem{definition}{Definition}
\theoremstyle{remark}
\numberwithin{equation}{section}
\begin{document}
	\onehalfspacing
	\title{No-harm principle, rationality, and Pareto optimality in games}
	\author{Shaun Hargreaves Heap\footnote{Department of Political Economy, King's College London, London, UK. E-mail: s.hargreavesheap@kcl.ac.uk} \and Mehmet S. Ismail\footnote{Department of Political Economy, King's College London, London, UK. E-mail: mehmet.ismail@kcl.ac.uk}}

	\date{\today}
	\maketitle
	\begin{abstract}

Mill's classic argument for liberty requires that people's exercise of freedom should be governed by a no-harm principle (NHP). In this paper, we develop the concept of a no-harm equilibrium in  $n$-person games where players maximize utility subject to the constraint of the NHP. Our main result is in the spirit of the fundamental theorems of welfare economics. We show that for every initial `reference point' in a game the associated no-harm equilibrium is Pareto efficient and, conversely, every Pareto efficient point can be supported as a no-harm equilibrium for some initial reference point.

	\end{abstract}
	\noindent \textit{JEL}: C72, D60
	
 \noindent \textit{Keywords}: Pareto optimality, rationality, classical liberalism, no-harm principle, non-cooperative games
	\newpage

 \section{Introduction}
\label{sec:intro}

	\begin{quoting}
		``It's a pretty long-standing principle that goes back to the 19th century that you are free to do things but not if you inflict harm on others.''
		
		(Nick Clegg explaining Facebook's decision to ban Donald Trump) 
	\end{quoting}
	
	Nick Clegg reminds us that the sense of liberty which animates the liberal democracies of North America, Europe and elsewhere is not a free-for-all.  At least since John Stuart Mill's \textit{On Liberty}, the writ of individual freedom in a society founded on the principle of liberty is not without constraint: people's freedom does not extend to harming others. For Mill, the failure to satisfy this condition supplied the only reason for interfering in what people wish to do: 
	\begin{quoting}
		``...the only purpose for which power can be rightfully exercised over any member of a civilised community, against his will, is to prevent harm to others.'' \citep{mill1859}
	\end{quoting}
	This may be the only reason for constraining individual action according to Mill, but, as Nick Clegg reminds us, it is, nevertheless, a constraint. In this paper, we examine how the introduction of such a no-harm principle (NHP) as a constraint on action in normal form games affects the analysis of rational play in these interactions. In particular, we develop the concept of a no-harm equilibrium where players maximize utility subject to the constraint of the NHP. Our main result is in the spirit of the fundamental theorems of welfare economics, where consumers maximize utility subject to their budget constraints. We show that for every initial `reference point' in a game the associated no-harm equilibrium is Pareto efficient and, conversely, every Pareto efficient point can be supported as a no-harm equilibrium for some initial reference point. 
	
	This is a striking theoretical result because it is in marked contrast to the well-known insight, for example in the conventional analysis of the prisoners' dilemma, that the Nash equilibrium of normal form games need not be Pareto efficient. Our result is not only potentially important for this reason, it also has policy implications. In the conventional wisdom, situations like a prisoners' dilemma provide a \textit{prima facie} case for policy  intervention to secure Pareto improvements. Our contrary result suggests something different. It is in the spirit of Coase's theorem in the sense that it is not the fact that there are, say, externalities in some settings that supplies the \textit{prima facie}  grounds for policy intervention. For \cite{coase1960}, it is the presence of non-negligible transaction costs. In our case, it  would be the failure, in practice, of agents to be  guided by the NHP that establishes the  \textit{prima facie} grounds for intervention. The NHP might not be operational in practice for a variety of reasons. For example, people may subscribe to a free-for-all sense of liberty and not the NHP version; or the NHP may be too cognitively demanding to guide citizens in a liberal society. Alternatively, the NHP may be rendered inoperable because there are disagreements over the relevant `reference point' or how to define `harm'. It is these conditions, our result suggests, that should trigger a \textit{prima facie} policy interest.
	
	Of course, our contrary result may be theoretically arresting and point to a different policy agenda, but it is only really important in so far as the NHP is thought to be relevant to the analysis of action in normal form games. The premise of the NHP as a constraint on action, in other words, has to be plausible. Nick Clegg's appeal to the principle is an illustration of its possible relevance in this respect. More compelling, perhaps, is the fact that the business of the judiciary in contemporary liberal societies frequently involves deciding when one person's exercise of liberty may or may not be reasonably said to have conferred a `harm' upon another. What counts as a `harm' is, of course, naturally controversial and this is why the courts get involved. The point, however, about the courts' involvement in such matters is that it arises because they are upholding the NHP. In other words, the NHP, and not a free-for-all, is constitutive, notably through the activities of the courts, of what freedom means in liberal democratic societies. As a result, the NHP might plausibly be thought to apply to all who voluntarily live in such societies. 
	
	This constitutive role of NHP in the liberal democratic understanding of freedom might be granted, but it does not necessarily mean that game theory should adopt the principle when analyzing what players should rationally do in games. It might be argued, for instance, that the judiciary identifies actions that satisfy this principle and so, when game theory's available actions are legal, all actions satisfy the principle and there is no need to apply a further NHP test in the analysis of what players will freely and rationally do. Game theory, though, does not typically require that actions available to players be legal. If game theory only considered legal actions, it would be unable to analyze when people follow the law (or more generally  uphold a contract) or indeed when they engage in civil disobedience. From this perspective it might, nevertheless, be alternatively argued that courts put a price on such harms and so the force of the NHP is in effect encoded in the players' pay-offs. However, what is or is not legal is only infrequently tested by courts and so the precise legal price of a harm is frequently not available. People in liberal societies, therefore, may need to anticipate and apply the NHP themselves: either because they need to anticipate what the courts might say if called upon or because they simply value freedom and so are guided by its entailment in liberal society. Facebook, for example, was not executing the court's judgement when banning Donald Trump. Nick Clegg, on behalf of Facebook, was either saying something like `Facebook values freedom and this means being guided by the NHP'; or he was anticipating what a court might decide when applying this principle, as it does when determining what freedom requires in a liberal society. 
	
	For these reasons, the NHP appears relevant to the analysis of action in games. Our paper, therefore, proceeds, as follows. In the next section, we provide an informal introduction to the challenges posed by introducing the NHP and sketch how we respond to them. We then set out our approach formally in section~\ref{sec:NHP}; and section~\ref{sec:existence} gives the results associated with the inclusion of NHP in three theorems. We reflect on these results in section~\ref{sec:illustrations} with some further illustrations and then discuss their relation to the literature in section~\ref{sec:discussion}. Section~\ref{sec:conclusion} concludes the paper.
	
	\section{An informal sketch of the challenges of the no-harm principle and our approach}
	
	The NHP does not permit a person to take an action that causes `harm' to others. Several challenges arise when deciding how to represent this principle as a constraint on actions in games. In this section we offer an informal sketch of how we respond to these challenges by making four key assumptions.

	Game theory has one advantage over the world that the judiciary addresses when trying to decide whether someone's actions cause a `harm' to another: game theory deals with interactions where the pay-offs, usually captured by utility numbers, to each player in each outcome are given.  With players' interests captured in this way by their pay-offs, it seems natural and uncontentious to say that a person suffers a `harm' when their pay-offs are reduced. We deliberately use utility numbers to capture ``pay-offs'' in what follows because this allows for an encompassing definition of possible harms. They could be psychological or symbolic  as much as material whereas to use \$ material pay-offs would be to restrict the concept of a harm to a \$ loss alone. 
	
	If a harm arises when a person’s action reduces another’s utility pay-off, a question naturally arises: a reduction relative to what? What is the reference point pay-off for judging whether the action causes a harm?  This is the first modelling challenge. Since games contain all the relevant available actions for players in the setting captured by the game, we assume that the reference pay-offs must be given by one of the outcomes in the game. We make no argument over which outcome should be used. Any outcome might serve as the reference. Instead, we seek to characterize in general terms the equilibria that result when players take any of the outcomes in the game as the shared reference point. The specific attributes of an equilibrium that satisfies the NHP may depend on the actual reference point, but we are interested with any general properties of such equilibria. 
	
	The next challenge arises because the NHP requires that an individual's action should not cause a harm to any other person and outcomes in normal form games typically result from the joint actions of several players. We need, therefore, some way of making sense of how an individual produces an outcome in a game and so judge whether that individual's action causes a harm.  Our approach follows \cite{brams1994}. It takes one outcome in the game as the reference point (one might think of this as the status quo) and builds an extensive form game on the basis of the possible sequential player deviations from this reference point. That is, the first player in this sequence must decide between `passing' their turn, `staying' at the reference outcome, and `moving' to an alternative outcome by changing their action so as to produce the alternative; and each player thereafter decides in this extensive form game between passing their turn, staying at the outcome they have inherited from the previous decisions of others in this branch of the extensive form game  and moving to an alternative one by changing their action. To fix ideas here consider the Prisoners dilemma game below.  
 
\begin{table}[h!]
	\begin{center}
 \setlength{\aboverulesep}{0pt}
\setlength{\belowrulesep}{0pt}
		\begin{tabular}{|c|c|c|}
			\midrule
			& C & D \\
			\midrule
			C & $3,3$ & $1,4$ \\
			\midrule
			D &	$4,1$ & $2,2$ \\
			\midrule
		\end{tabular}
	\end{center}\end{table}
	Suppose DD is the reference outcome and Row is the first player to consider a deviation. The extensive form game is constructed as follows. Row can either `pass' their turn to Column, `stay' at DD, or `move' to CD by changing their action to C. If Row chooses to `pass' at DD, then, Column faces the same decision at the second node of this extensive form game as the one we have just considered for Row. If Row chooses to `stay' at DD, then DD remains the outcome. If  Row chooses to `move' to CD, Column at the second decision node in the extensive form game must decide between passing, staying at CD  and moving to CC by changing their action from D to C; if Column chooses to pass at CD then Row decides between passing, staying at CD, and moving to DD; and so on. Figure~\ref{fig:PD_illustration1} captures these early decision nodes in the extensive form game based on DD as the reference outcome.
	
	\begin{figure}
		\centering
		\resizebox{.93\textwidth}{!}{
		\begin{tikzpicture}[font=\footnotesize,edge from parent/.style={draw,thick}]
		\tikzstyle{solid node}=[circle,draw,inner sep=1.2,fill=black];
		\tikzstyle{hollow node}=[circle,draw,inner sep=1.2];
		\tikzstyle{level 1}=[level distance=15mm,sibling distance=50mm]
		\tikzstyle{level 2}=[level distance=15mm,sibling distance=25mm]
		\tikzstyle{level 3}=[level distance=15mm,sibling distance=15mm]
		\node(0)[hollow node]{}
		child{node[solid node]{}
			child{node[solid node]{}
				child{node[below]{} edge from parent[shorten >=0.7cm, dotted, thick] node[left]{}}
				edge from parent[black, solid, thick] node[above left]{$C$ moves}
			}
			child{node[solid node][below]{}
				child{node[below]{} edge from parent[shorten >=0.7cm, dotted, thick] node[left]{}}
				edge from parent[black, solid, thick] node[below left]{$C$ stays}
			}
			child{node[solid node][below]{}
				child{node[below]{} edge from parent[shorten >=0.7cm, dotted, thick] node[left]{}}
				edge from parent[black, solid, thick] node[above right]{$C$ passes}
			}
			edge from parent[black, solid, thick] node[above left, yshift=-3]{$R$ moves}
		}
		child{node[solid node, below]{}
			child{{}
			edge from parent[shorten >=0.9cm, dotted, thick] node[above left]{}
			}		
			edge from parent[thick] node[right]{$R$ stays}
		}
		child{node[solid node]{}
			child{{}
				edge from parent[shorten >=0.9cm, dotted, thick] node[above left]{}
			}
			edge from parent[thick] node[above right, yshift=-4]{$R$ passes}
		}
	;
		\node[above,yshift=2]at(0){DD};
		
		\node[above,yshift=2]at(0-1){CD};
		\node[right,yshift=3]at(0-2){DD};
		\node[right,yshift=3]at(0-3){DD};
		
		\node[left,yshift=3]at(0-1-1){CC};
		\node[right,yshift=3]at(0-1-2){CD};
		\node[right,yshift=3]at(0-1-3){CD};
		\end{tikzpicture}
		}
		\caption{Illustration of early decision nodes in the extensive form game based on DD as the reference outcome}
		\label{fig:PD_illustration1}
	\end{figure}
	
	The virtue for our purpose of adopting Brams's Theory of Moves in this way is that each outcome in this extensive form game now occurs through an individual action at some decision node and this allows us to identify how an individual's action can be said to cause a particular outcome.  Thus, we can say in our illustration that, if Row begins by `moving' to CD through a `move' to C from the reference point of DD, then Row has caused CD at this point in the extensive form game. Likewise, if Row initially decides to `stay' at DD, they have caused DD. However, when Row `passes' at DD, they do not cause DD because they have exempted themselves from decision making by passing the choice to the Column player. (In effect, giving players an option to `pass' endogenises the player order for these sequential deviations and so has no influence in a $2\times 2$ symmetric game like the Prisoners' Dilemma.  In more complicated $n$-player games, `pass' has further technical function of enabling every outcome to be visited through a sequence of player deviations.) 
	
	Brams (1994) assumes farsighted rationality and solves by backward induction the extensive form game created by his procedure of sequential deviations. He calls the outcome of his proposed sequential procedure a non-myopic equilibrium (see also Brams and Wittman, \citeyear{brams1981}, and Kilgour, \citeyear{kilgour1984}).  We adopt the same approach of applying the subgame perfect equilibrium solution concept but we introduce the NHP as a constraint on play in the extensive form game and so call the outcome a no-harm equilibrium. Thus, to return to the prisoners' dilemma with DD as the initial reference outcome, we ask players whether the subgame perfect equilibrium of the NHP constrained extensive form game based on DD is DD or some other outcome. If it is DD, then DD is the no-harm equilibrium outcome of the game. If it is not, then the no-harm equilibrium with DD as the reference outcome is whatever the subgame perfect equilibrium is in this NHP constrained extensive form game.  The next challenge is, therefore, to represent the NHP in this extensive form game. This requires three further assumptions. 
	
	One is innocuous in the sense that a well-defined finite extensive form game requires a set of terminal nodes. One part of how we do this is by saying that when both players decide to `stay' at an outcome, then this outcome is implemented. The intuition behind this assumption is that the first decision to `stay' is like a `proposal' to implement this outcome and the second decision to `stay' amounts to an acceptance of this `proposal'. This naturally produces a terminal node for some branches of the extensive from game. Likewise, if both players decide to pass, then this naturally produces a terminal node. However, we also need to prevent the `moving' branches of the tree creating what are infinite cycles through the possible outcomes in the game. We do this by preventing a player repeating the same action (`move' or `stay') a certain number of times (say, $k$ times where our results hold irrespective of $k$). Suppose, for example, we follow the branch in the extensive form game that begins with Row `moving' to CD, Column next `moves' to CC, Row then `moves' to DC and Column `moves' to DD. When $k=1$, Row cannot `move' to CD again. Row can either `stay' or `pass' but if they choose to `stay', they cannot choose to `stay' again at DD. Thus, DD is the terminal outcome either because both players `stay' or both `pass' at DD  for  this truncated move branch of the extensive form game.
	
	The question arises as to what outcome should be implemented if players reach a terminal node through joint `passing'---either because the truncation $k$ rule has been triggered or because players have both chosen to `pass' at an earlier point in the extensive form game. The terminal outcome cannot be said to have been chosen by the players in these circumstances because neither has decided to stay at this terminal outcome. We assume, therefore, that its reference point is implemented because no other outcome along the path to this terminal outcome, including the terminal outcome itself, has been consciously endorsed by both players through `stay' decisions.
	
	The next assumption embodies the NHP. We say that in so far as someone's choice of action contributes to causing an outcome \underline{that is implemented}, then they are only permitted to take that action if the resulting outcome does not harm other players. An individual can only contribute to causing an outcome that is implemented by deciding to `stay' at that outcome. Of course, it takes more than one decision to `stay' for an outcome actually to be implemented. But in so far as one individual could contribute to an outcome being implemented, they would do so by deciding to `stay' at that outcome.  Thus, we apply the NHP to an individual's decision to `stay'. It does not apply to a `move' decision for the same reason: the condition for implementation is two `stay' decisions and so only `stay' can be said to contribute to causing and potentially implementing an outcome. To illustrate, in the extensive form game that begins with DD as the reference outcome in the prisoners' dilemma, Row `staying' at DD satisfies the NHP and so does Row `moving' to CD (because  NHP only applies to `stay' decisions). However, Column's option to `stay' at CD would not satisfy the NHP (because it harms Row), but moving to CC does (because NHP does not apply to `move'). Thus, the beginning of the NHP constrained extensive form game looks like Figure~\ref{fig:PD_illustration2} (with Column stays at CD  faded out as an option as compared with the unconstrained version in Figure~\ref{fig:PD_illustration1}).

	\begin{figure}
		\centering
		\resizebox{.93\textwidth}{!}{
		\begin{tikzpicture}[font=\footnotesize,edge from parent/.style={draw,thick}]
		\tikzstyle{solid node}=[circle,draw,inner sep=1.2,fill=black];
		\tikzstyle{hollow node}=[circle,draw,inner sep=1.2];
		\tikzstyle{level 1}=[level distance=15mm,sibling distance=50mm]
		\tikzstyle{level 2}=[level distance=15mm,sibling distance=25mm]
		\tikzstyle{level 3}=[level distance=15mm,sibling distance=15mm]
		\node(0)[hollow node]{}
		child{node[solid node]{}
			child{node[solid node]{}
				child{node[below]{} edge from parent[shorten >=0.7cm, dotted, thick] node[left]{}}
				edge from parent[black, solid, thick] node[above left]{$C$ moves}
			}
			child{node[opacity=0.3,solid node][below]{}
				child{node[below]{} edge from parent[shorten >=0.7cm, dotted, thin] node[left]{}}
				edge from parent[opacity=0.5, thin] node[below left]{$C$ stays}
			}
			child{node[solid node][below]{}
				child{node[below]{} edge from parent[shorten >=0.7cm, dotted, thick] node[left]{}}
				edge from parent[black, solid, thick] node[above right]{$C$ passes}
			}
			edge from parent[black, solid, thick] node[above left, yshift=-3]{$R$ moves}
		}
		child{node[solid node, below]{}
			child{{}
			edge from parent[shorten >=0.9cm, dotted, thick] node[above left]{}
			}
			edge from parent[thick] node[below left]{$R$ stays}
		}
			child{node[solid node]{}
			child{{}
				edge from parent[shorten >=0.9cm, dotted, thick] node[above left]{}
			}
			edge from parent[thick] node[above right, yshift=-4]{$R$ passes}
			}	
	;
		\node[above,yshift=2]at(0){DD};
		
		\node[above,yshift=2]at(0-1){CD};
		\node[right,yshift=3]at(0-2){DD};
		\node[right,yshift=3]at(0-3){DD};
		
		\node[left,yshift=3]at(0-1-1){CC};
		\node[opacity=0.3, right,yshift=3]at(0-1-2){CD};
		\node[right,yshift=3]at(0-1-3){CD};		
		\end{tikzpicture}
		}
		\caption{The beginning of the NHP constrained extensive form game where DD is the reference outcome.}
		\label{fig:PD_illustration2}
	\end{figure}
	
	This illustration gives an immediate insight into how individual farsighted rationality and the NHP might combine to make CC the no-harm equilibrium in this extensive form game. The pursuit of the best option for Column at the second CD node by `staying' is precluded by the NHP. At any later decision node on this branch of the game, Row will not be able to `stay' at DC for the same reason and so the only possible terminal options along this branch will be DD (either through Row's initial `stay' decision or through sequential `moves' to this terminal node followed by mutual `passes' leading to its reference point, DD, being implemented) and CC (through mutual `stay' decisions at this point). Thus either DD or CC will be implemented and farsighted rationality will secure CC. The details are, of course, a bit more complicated. One comment is worth making, nevertheless. 
	
	Our NHP principle deliberately does not embody farsighted rationality. We introduce farsighted rationality as a separate assumption. The application of farsighted rationality requires an assumption of common knowledge of rationality and there is no reason to bind the NHP to such an assumption. A `harm' is a `harm' whether the other player is rational or not. The most that can be said in such circumstances with respect to whether a player's action causes a harm, is: does that action by itself do as much as any individual can do to causing an outcome to be implemented that harms someone else? That is, do they choose to `stay'.  
	
	Two final details in our approach are worth noting at this stage. First, we require only two stay decisions at a particular outcome for it to be implemented in an `$n$-person' game. This is because we wish to avoid building in Pareto improvements through some version of unanimity rule that requires everyone to agree on some outcome before it is implemented. Our condition for implementation is, therefore, in general, much weaker than unanimity. However, in a two-person game it does amount to unanimity and this supplies another part of the intuition behind why CC emerges as the no-harm equilibrium in the prisoners' dilemma illustration. In section~\ref{sub:termination}, we extend our model to the case in which any player can unilaterally implement the outcome on their turn.  These and other natural possible modifications do not affect our results.
	
	Second, a final key assumption for more complicated games than the two person prisoners' dilemma is that if say player $i$ decides to stay at an outcome but the next player $j$ decides to reject this proposal by moving to another outcome, then the reference point is updated to the outcome proposed by $i$ through their decision to `stay'. In effect, this outcome has been endorsed by $i$, it satisfies the NHP for $i$ and it could have been implemented by $j$; so, it is natural to use this as the (new) reference point for judging future deviations. Thus, in general, the decision to `stay' is also a decision to change the reference point and this can only be done by a player when to do so satisfies the NHP.

	\section{The no-harm principle in non-cooperative games}
	\label{sec:NHP}
	
	\subsection{The setup}

	Let $G=(A_i,u_i)_{i\in  N}$ denote a normal form game in which $N=\{1,2,…,n\}$ denotes a society whose members are called players, $A_i$ finite pure action set of player $i$, $u_i:A\rightarrow \mathbb{R}$ player $i$'s Bernoulli utility function representing a strict ranking over the set of action profiles $A=\times_{i\in  N} A_i$. Let $a=(a_1, a_2, ..., a_n)\in  A$ denote a pure action profile in game $G$.\footnote{Our definitions can be extended to the games with mixed strategies in a straightforward way. We keep the current framework for its simplicity.} As is standard in normal form games, every action profile is associated with an outcome (i.e., a pay-off profile) and vice versa. We use the terms ``action profile'' and ``outcome'' interchangeably.
	
	A profile $a$ Pareto dominates $a'$ if for all $i$, $u_i (a)\geq u_i (a')$ with at least one strict inequality. A profile is called Pareto optimal or efficient if there is no other profile that Pareto dominates it. A profile is called weakly Pareto optimal if there is no other profile in which everyone is strictly better off.

	Fix a game $G=(A,u)$ and action profile $a_0\in A$. For a natural number $k\in \mathbb{N}^+=\{1,2,...\}$, and a player function $I$, we define an associated extensive form game with perfect information denoted by $\Gamma(a_0, k, I)=(N, X , I, u, \Sigma, H)$. We refer to this extensive form game as simply ``$\Gamma$''. The interpretation of $\Gamma$ is that starting from $a_0$, each player $i$ sequentially decides to `stay', `move' or `pass' in game $G$ until the play terminates.\footnote{For a standard textbook on extensive form games, see, e.g., \cite{osborne1994}.}

	Let $X$ denote a game tree, $x\in X$ a node in the tree, $x_0$ the root of the game tree, and $z\in Z$ a terminal node, which is a node that is not a predecessor of any other node.
	
	\subsubsection{Player function}
	
	Let $[x_m]=\{x_0, x_1, x_2, ..., x_m\}$ denote the \textit{path of play} between node $x_0$ and node $x_m \in X$ where for every $j=0,1,...,m-1$, $x_{j+1}$ is an immediate successor of $x_{j}$. Let $I:X\to N$ be the player function, where $I(x)$ gives the ``active'' player who moves at node $x$. The only restriction we impose on the player function is the following. For player $i$, let $[x_m]_i=|\{x'\in [x_m]| I(x')=i\}|$, i.e., the number of times player $i$ is active during the path of play $[x_m]$. We assume that for every player $i$, every player $j\neq i$, and every path of play $[x_m]$, $|[x_m]_i-[x_m]_j|\leq 1$. In other words, if a player has been active $\bar{m}$ times in some path of play, then every other player should have been active at least $\bar{m}-1$ times. This assumption ensures that every player has more or less the same number of moves to play on every path. Beyond this assumption, notice that the order in which players take turn is not fixed and at every non-terminal node $x$ the next active player may depend on the particular action chosen by player $I(x)$.
	

	\subsubsection{Actions, strategies, and information sets}

	Let $h\in H$ denote an information set, which is a singleton. With a slight abuse of notation, an information set $h$ at node $x$ is denoted by $x$, i.e., $h=x$. A \textit{subgame} $\Gamma|x$ of a game $\Gamma$ is the game $\Gamma$ restricted to an information set $h=x$ and all of its successors in $\Gamma$. 
	
	Next, we introduce $S:X \rightarrow A$, called the \textit{state function}, that maps each node $x\in X$ to an action profile $a\in A$. We define $S$ by induction. The state at the root $x_0$ of the game is defined as $a_0$, i.e. $S(x_0)=a_0$, which is the action profile in $G$ where the extensive form game $\Gamma$ starts. Let $x\in X$, $x\neq x_0$, be a node, $x'\in X$ the immediate predecessor of $x$, $i=I(x')$, and $a'_i$ player $i$'s action that leads to node $x$. Assume that $S(x')=a$. Then, define $S(x)=(a'_i,a_{-i})$. In other words, at every node $x$, the state $S(x)$ is given by the action profile $(a'_i,a_{-i})\in A$ such that player $i=I(x')$ changes only the $i$'th component of the state at $x'$.

	Let $A_i(x)$ denote the set of pure actions of player $i$ at $x$. For each $x\in X$, $A_i(x)$ is defined as follows. First, define $X'(a_i,x)=\{x'\in X| a_i~\text{is chosen at}~x',$ $S(x')=S(x),\text{and}~x'~\text{is a predecessor of}~x \}$. Then, $A_i(x)=\{a_i\in A_i|~|X'(a_i,x)|$ $<k \}\cup \{p\}$, where $p$ stands for pass. For example, suppose that $k=1$, $i=I(x_0)=I(x'')$, $x_0\neq x''$, and $S(x_0)=S(x'')$. If $i$ chooses $a_i$ at $x_0$, then $a_i\notin A_i(x'')$ because $k=1$ implies that $a_i$ can be chosen at state $S(x_0)$ only once. Unless otherwise stated, we assume that $k=1$ in the examples throughout the paper. This means once a player returns to a state, they cannot make the same decision as they made last time they were at this state. This is the assumption that prevents infinite cycling through always choosing to `move' at each decision node and/or through one player staying and the other(s) passing, repeatedly at the same state. When this truncation rule is binding at a decision node $x$, players can only choose to pass at $x$.

	
	Let $A'_i=\bigcup_{x\in X_i}A_i(x)$ denote player $i$'s set of all pure actions where $X_i$ is player $i$'s set of all information sets. Let $\Sigma_i=\bigtimes_{x\in X_i} A_i(x)$ denote the set of all pure strategies of $i$ where a pure strategy $\sigma_i\in \Sigma_i$ is a function $\sigma_i:X_i\rightarrow A'_i$ satisfying $\sigma_i(x)\in A_i(x)$ for all $x\in X_i$. Let $\sigma\in \Sigma$ denote a pure strategy profile and $u_i(\sigma)$ its (Bernoulli) utility for player $i$.

	Let $a'_i\rightarrow x'$ denote player $i$'s action $a'_i\in A_i(x)$ that leads to node $x'\in X$.
	Let $[\sigma]=\{x\in X|\sigma_i(x')\rightarrow x ~\text{for some}~i\in N, x'\in X\}\cup \{x_0\}$ be the path of play of $\sigma$ and $\overline{[\sigma]}$ be the terminal node in $[\sigma]$.

	
	\subsubsection{Reference points, terminal nodes, and utility functions}
	\label{sub:reference_points}
	
	For a given strategy profile $\sigma$, let 
	$$R(\sigma)=\{a\in A |x\in [\sigma], \sigma_i(x)=a_i\in A_i(x), a=S(x)\}\cup \{a_0\}$$ be the set of all \textit{reference points} of $\sigma$. In other words, a state is called a reference point if the player who acts at the associated node ``stays'' at it: that is chooses not to change it. The initial reference point $a_0$ is included in $R(\sigma)$.
	
	Note that given a profile $\sigma$, for every decision node $y \neq x_0$ there is a unique reference point $a_y\in R(\sigma)$ where $S(x)=a_y$ for some predecessor $x$ of $y$. The unique reference point at $x_0$ is $a_0$ by definition. Thus, we can refer to \textit{the} reference point at every node $y\in X$.
	
	We next define `off-path' reference points. Let $y\neq x_0$ be a non-terminal node. The set of  reference points of $(\sigma|y)$, denoted by $R_{|y}(\sigma)$, is defined as $R(\sigma|y)$ except that $a_0$ is replaced with $a_y$, which is the reference point at $y$, $a_y\in R(\sigma)$. The intuition is that if we restrict a strategy profile $\sigma$ to a node $y$, then the initial reference point of $(\sigma|y)$ should be $a_y$ and not necessarily $a_0$.
	
	Game $\Gamma$ comes to an end under two situations. First, let $x'$ be a node and $x$ be a (not necessarily immediate) successor of $x'$ such that $I(x')=i$, $I(x)=j$, $j\neq i$, and the reference point at $x$ is $x'$ where $S(x')=S(x)=b$. If player $i$ stays at $b$ by choosing $b_i$, making $b$ the reference point, and player $j\neq i$ also stays at $b$ by choosing $b_j$, then node $x$ is called a \textit{terminal node}. Second, let $\{x^1, x^2, ..., x^n, x^{n+1}\}$ be a path of play such that for every player $i\in N$ there exists $x^m$ such that $i=I(x^m)$ where $n\geq m \geq 1$, and for every $m$, $x^{m+1}$ is an immediate successor of $x^{m}$. Node $x^{n+1}\in X$ is called a \textit{terminal node} if every $i$ chooses $p$ (i.e., pass) at $x^{m}$. In plain words, the game terminates if either (i) two distinct players choose to stay at a state (the first one is like a `proposal' to implement this state and the second one amounts to an acceptance of this `proposal'), or (ii) every player consecutively passes their turn.  
	
	Let $\sigma \in \Sigma$ be a strategy profile, $\overline{[\sigma]}=z$ its terminal node, and $a\in R(\sigma)$ the reference point at $z$. We define the \textit{outcome} of $\sigma$ as $a$.
	With slight abuse of notation we use the same utility function for $u_i(\sigma)$ and $u_i(a)$ because their outcomes, and hence their utilities are the same. In summary, for every player $i$, $u_i(\sigma)=u_i(a)$, where $a$ is the reference point at $z=\overline{[\sigma]}$. Put simply, the reference point at the terminal node is implemented as the outcome of the relevant strategy profile under both (i) and (ii) above. The reason the reference point at the terminal node is implemented as the outcome in condition (ii) is that no other state from the reference point to the terminal node has been endorsed by any player through a `stay' decision and that players have consciously chosen not to stay at the terminal node. As mentioned earlier, (ii) is in part a technical condition that prevents infinite cycling. In section~\ref{sec:existence} (proof of Theorem~\ref{thm:pareto}), we show that there is always an NHE whose outcome is attained under condition (i).
	
	\subsubsection*{An illustrative example}
	
	To illustrate our notation, we return to the prisoners' dilemma (PD). Let $\Gamma(a_0, k, I)$ be the extensive form game that begins this time with the reference point $a_0=$ CC. Assume that Row (player 1) moves first, Column (player 2) moves second, and this sequential order strictly alternates irrespective of players' choices. Starting from CC players might end up at DD if they play as follows (see Figure~\ref{fig:illustrative1}). Row unilaterally switches their action to D, hence `moving' to DC. Column then moves to DD, where Row chooses D to `stay' which makes DD the new reference point. Column also stays at DD, where both players receive pay-offs of (2, 2). If, instead of staying, both Row and Column choose to pass at DD, then the implemented outcome would be the reference point at this node, which is CC. The difference between passing and staying is that passing changes the order of play but does not change the reference point.
	
	Note that $k=1$ implies that a `cycle' cannot be repeated. Suppose, for example, that Row moves to DC from CC, Column moves  to DD, Row moves to CD, and Column moves back to CC. Then, if $k=1$ Row cannot choose D again at CC. Row can now only either stay or pass at CC.
	
	We have not so far imposed any restrictions on the players' choices such as `rationality' or `no-harm principle'. We next introduce the NHP.

			\begin{figure}
				\centering
				\resizebox{.93\textwidth}{!}{
				\begin{tikzpicture}[font=\footnotesize,edge from parent/.style={draw,thick}, decoration={
							markings,
							mark=at position 0.5 with {\arrow{>}}}
				]
				\tikzstyle{solid node}=[circle,draw,inner sep=1.2,fill=black];
				\tikzstyle{hollow node}=[circle,draw,inner sep=1.2];
				\tikzstyle{level 1}=[level distance=15mm,sibling distance=50mm]
				\tikzstyle{level 2}=[level distance=15mm,sibling distance=25mm]
				\tikzstyle{level 3}=[level distance=15mm,sibling distance=15mm]
				\node(0)[hollow node]{}
				child{node[solid node]{}
					child{node[solid node]{}
						child{node[below]{} edge from parent[shorten >=0.7cm, dotted, thick] node[left]{}}
						edge from parent[black, solid, thick] node[above left]{C}
					}
					child{node[solid node][below]{}
							child{node[solid node]{}
								child{node[below]{} edge from parent[shorten >=0.7cm, dotted, thick] node[left]{}}
								edge from parent[black, solid, thick] node[above left]{C}
							}
						child{node[solid node][below]{}
								child{node[solid node]{}
									child{node[below]{} edge from parent[shorten >=0.7cm, dotted, thick] node[left]{}}
									edge from parent[black, solid, thick] node[above left]{C}
								}
								child{node[below]{$\begin{pmatrix} 2\\2\end{pmatrix}$}
									edge from parent[black, solid] node[left]{D}
								}
								child{node[solid node][below]{}
									child{node[below]{} edge from parent[shorten >=0.7cm, dotted, thick] node[left]{}}
									edge from parent[black, solid, thick] node[above right]{$p$}
								}					
							edge from parent[black, solid] node[left]{D}
						}
							child{node[solid node][below]{}
								child{node[below]{} edge from parent[shorten >=0.7cm, dotted, thick] node[left]{}}
								edge from parent[black, solid, thick] node[above right]{$p$}
							}					
						edge from parent[black, solid] node[left]{D}
					}
					child{node[solid node][below]{}
						child{node[below]{} edge from parent[shorten >=0.7cm, dotted, thick] node[left]{}}
						edge from parent[black, solid, thick] node[above right]{$p$}
					}
					edge from parent[black, solid] node[above left, yshift=-3]{D}
				}
				child{node[solid node, below]{}
					child{{}
			        edge from parent[shorten >=0.9cm, dotted, thick] node[above left]{}
			        }
					edge from parent[thick] node[right]{C}
				}
				child{node[solid node]{}
					child{{}
						edge from parent[shorten >=0.9cm, dotted, thick] node[above left]{}
					}
					edge from parent[thick] node[above right, yshift=-4]{$p$}
				}
			;
				\node[above,yshift=2]at(0){1};
				
				\node[above,yshift=2]at(0-1){2};
				\node[right,yshift=3]at(0-3){2};
				
				\node[above left,yshift=3]at(0-1-1){1};
				\node[above right,yshift=3]at(0-1-2){1};
				\node[above right,yshift=3]at(0-1-3){1};
				\node[above right,yshift=2]at(0-1-2-2){2};
				\node[above left,yshift=2]at(0-1-2-1){2};
				\node[above right,yshift=2]at(0-1-2-3){2};
				\end{tikzpicture}
				}
				\caption{An illustrative example where $a_0=$ CC in the PD. Row (player 1) moves first and Column (player 2) moves second.}
				\label{fig:illustrative1}
			\end{figure}

	\subsubsection{The no-harm principle}
	\label{sub:NHP_definition}
	
	Our specification of the no-harm principle applies when a player stays. We make this assumption for two reasons. First, in a dynamic strategic setting, the classical liberal has no reason to be concerned with the properties of any transitional (i.e., non-reference point) states in the extensive form game, particularly if they are purely mental constructs. Second, in contrast when a player chooses to stay, this matters for everyone because either another player follows this by choosing to stay and this becomes the implemented outcome; or, in so far as the play moves to another outcome, the reference point changes through the stya decision and this conditions future play and  the eventual  outcome. Indeed, an individual can only influence the character of the eventual outcome either directly or indirectly by choosing to stay because this changes the reference point. The point is that the only other way that a terminal node is reached is by mutual decisions to pass, in which case the original or prior reference point is implemented and the decision to pass has not affected the implemented outcome.

	\begin{definition}[No-harm principle]
	\label{def:NHP_action}
	Let $\Gamma$ be a game, $\sigma$ a strategy profile, $x\in X$ a non-terminal node, $b\in R(\sigma)$ the reference point at $x$, and $S(x)=a$. Action $\sigma_i(x)=a_i$ is said to satisfy the no-harm principle (NHP) at $x$ if for every $j\neq i$, $u_j (a)\geq u_j(b)$. 
	Strategy profile $\sigma$ satisfies the NHP at $x_0$ if for every $i$ and every $x$ as defined above, $\sigma_i(x)=a_i$ satisfies the NHP. Finally, strategy profile $\sigma$ satisfies the \textit{no-harm principle} if for every non-terminal $x'$, $(\sigma|x')$ satisfies the NHP at $x'$.
	\end{definition}
	
	In plain words, a player's stay action satisfies the NHP if their decision does not harm others with respect to the current reference point (e.g., see Figure~\ref{fig:PD_illustration2}). Accordingly, a strategy profile satisfies the NHP if every player's every stay action (both on-path and off-path) under that strategy profile satisfies the NHP. 
	
	Of note, the no-harm principle implies neither Pareto optimality nor even Pareto improvement from a reference point. In section~\ref{subsec:roleof_NHP}, we illustrate that assuming the no-harm principle may lead a society to a Pareto inferior outcome compared to the initial reference point. Even in situations in which the NHP leads to a Pareto improvement, the outcome of the game may be Pareto dominated as we illustrate in section~\ref{sub:structure_normal_form}.
	
	\subsection{The no-harm equilibrium}
	\label{sec:solution_concept}
	
	We assume that players are individually rational and farsighted in the usual sense of subgame perfection and are additionally constrained by the no-harm principle (NHP) in their action choices in $\Gamma$. Moreover, we assume that $G$, $\Gamma$, and the previous sentence are common knowledge \citep{lewis1969,aumann1976}. First, we define subgame perfect equilibrium \citep{selten1965,nash1951}.
	
	A pure strategy profile $\sigma \in \Sigma $ in game $\Gamma$ is called a \textit{subgame perfect equilibrium} (SPE) if for every player $i$ and for every non-terminal $x\in X$ where $i=I(x)$, $u_i(\sigma|x)\geq u_i(\sigma'_i,\sigma_{-i}|x)$ for every $\sigma'_{i}|x\in \Sigma_i|x$. Put differently, $\sigma$ is a subgame perfect equilibrium if it constitutes a Nash equilibrium in every subgame of $\Gamma$.
	
	\begin{definition}[No-harm equilibrium]
		\label{def:NHE}
		Let $G=(A,u)$ be a game. A pure strategy profile $\sigma^* \in \Sigma $ that satisfies the no-harm principle is called a \textit{no-harm equilibrium} (NHE) in $G$ if for every player $i$ and for every non-terminal $x\in X$ where $i=I(x)$ \[
		u_i(\sigma^*|x)\geq u_i(\sigma'_i,\sigma^*_{-i}|x)
		\]
		for every $\sigma'_{i}|x\in \Sigma_i|x$ such that $(\sigma'_{i},\sigma_{-i}^*)\in \Sigma$ satisfies the no-harm principle.
	\end{definition}
	
	In plain words, a strategy profile is an NHE if at every node the active player plays a best response under the constraint of the no-harm principle. Like subgame perfect equilibrium, in finite games no-harm equilibria can be computed using backward induction under the constraint of the NHP. Note that an NHE is \textit{not} equivalent to a strategy profile that is both a subgame perfect equilibrium and satisfies the no-harm principle, in part because in general there may be no SPE that satisfies the NHP, but as we show in section~\ref{sec:existence} an NHE always exists.
	
	Am NHE in $G$ depends, of course, on $\Gamma(a_0,k,I)$, i.e., the initial reference point, $a_0$, the player function $I$, and $k$. But for now it is important to note that the NHP \textit{per se} does not require Pareto optimality of the outcome. Players simply act independently and maximize their individual utility; they do not act to maximize the pay-offs of others. They can stay wherever they want as long as the outcome does not harm others with respect to the reference point and there could always be other outcomes that are as good for the individual who decides to `stay' and which would be better for the other players. We illustrate this point in section~\ref{subsec:roleof_NHP} with an example where the NHP by itself does not produce a Pareto efficient outcome (see also section~\ref{sub:structure_normal_form}). We next show under what conditions the NHE outcomes are Pareto optimal in $n$-person games.
	
	\section{Existence, uniqueness, and efficiency}
	\label{sec:existence}
	
	In this section, we first show that the NHE exists under general conditions in normal form games.
	
	\begin{theorem}[Existence]
		\label{thm:existence}
		Let $G=(A,u)$ be a game. For every initial reference point $a_0\in A$, for every $k\in \mathbb{N}^+$, and every player function $I$, there exists an NHE associated to $a_0$ in pure strategies.
	\end{theorem}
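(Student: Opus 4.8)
The plan is to reduce the statement to the classical fact that every finite extensive game of perfect information has a subgame perfect equilibrium, by first building from $\Gamma=\Gamma(a_0,k,I)$ a ``no-harm restricted'' game $\hat\Gamma$ in which the no-harm principle has been hard-wired into the action sets. For a non-terminal node $x$ with $i=I(x)$, write $a_x$ for the reference point at $x$ and recall that $S(x)$ is the state at $x$; I would define $B_i(x)\subseteq A_i(x)$ to be the set obtained from $A_i(x)$ by deleting the `stay' action (the action that leaves $S(x)$ unchanged) exactly when that action would be harmful, i.e.\ when $u_j(S(x))<u_j(a_x)$ for some $j\neq i$, and by setting $B_i(x)=A_i(x)$ otherwise. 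The first point to nail down is that this is well posed: in a perfect-information tree every node has a unique history, and the state $S(x)$ together with the reference point $a_x$ are functions of that history alone ($a_x$ being simply the most recent state at which some player `stayed', or $a_0$ if there is none), and -- by the way off-path reference points were defined -- this computation is unchanged when carried out inside a subgame $\Gamma|x'$ starting from its own reference point $a_{x'}$. Hence $B_i(x)$ depends only on $x$, and, crucially, $B_i(x)$ is \emph{never empty}, since the action $p$ (pass) always lies in $A_i(x)$ and is never the `stay' action. I would then let $\hat\Gamma$ be the game with the same tree, the same terminal-node rule, and the same outcome/utility assignment as $\Gamma$, but with each action set $A_i(x)$ replaced by $B_i(x)$.

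Next I would observe that $\hat\Gamma$ is again a finite extensive game of perfect information: recall that $\Gamma$ is finite (the truncation parameter $k$ bounds the number of non-pass decisions on any path, rule~(ii) bounds the number of consecutive passes, and the tree is finitely branching), and $\hat\Gamma$ is obtained from $\Gamma$ by pruning subtrees. Because every $B_i(x)$ is nonempty, each non-terminal node of $\Gamma$ retains at least its $p$-successor in $\hat\Gamma$, so every maximal path in $\hat\Gamma$ still ends at a terminal node of $\Gamma$; and since the outcome at a terminal node is read off from its history, it is the same in $\hat\Gamma$ as in $\Gamma$. Backward induction then yields a subgame perfect equilibrium $\sigma^*$ of $\hat\Gamma$ in pure strategies, which is in particular a pure strategy profile of $\Gamma$.

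It then remains to check that $\sigma^*$ is an NHE associated to $a_0$. For the no-harm property: $\sigma^*$ prescribes, at every node, an action in $B_i(\cdot)$, so any `stay' it ever prescribes is harmless relative to the prevailing reference point; by the subgame-stability of the reference-point computation noted above, $(\sigma^*|x')$ then satisfies the NHP at every non-terminal $x'$, i.e.\ $\sigma^*$ satisfies the no-harm principle. For the equilibrium inequality, I would fix a player $i$, a non-terminal node $x$ with $I(x)=i$, and a deviation $\sigma'_i$ with $(\sigma'_i,\sigma^*_{-i})$ satisfying the NHP; by Definition~\ref{def:NHP_action} the only action the NHP can forbid is a harmful `stay', so NHP-compliance forces $\sigma'_i(y)\in B_i(y)$ at \emph{every} node $y$ with $I(y)=i$, which makes $\sigma'_i$ a strategy of $\hat\Gamma$. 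Since $\sigma^*$ is a subgame perfect equilibrium of $\hat\Gamma$, this gives $u_i(\sigma^*|x)\ge u_i(\sigma'_i,\sigma^*_{-i}|x)$, which is exactly the condition in Definition~\ref{def:NHE}; hence $\sigma^*$ is an NHE.

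The step I expect to be the main obstacle is the bookkeeping in the first paragraph -- making fully rigorous that the reference point at a node, and therefore the set of admissible `stay' actions, is determined by the node alone and behaves correctly under passage to subgames, so that $\hat\Gamma$ genuinely is a well-defined extensive game and the induction is legitimate. Everything else is the textbook existence proof for subgame perfect equilibria carried out inside the restricted action sets; the single ingredient specific to this model is that $p\in B_i(x)$ always, which is what keeps the constrained best-response problem non-vacuous -- and hence why an NHE exists even though, as remarked after Definition~\ref{def:NHE}, $\Gamma$ itself need not possess a subgame perfect equilibrium that satisfies the NHP.
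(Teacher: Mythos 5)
Your proposal is correct and takes essentially the same route as the paper's own proof: both restrict the game by removing NHP-violating choices (which, by Definition~\ref{def:NHP_action}, can only be harmful `stay' actions, so `pass' keeps every action set nonempty), invoke existence of a pure subgame perfect equilibrium in the resulting finite perfect-information game, and observe that this equilibrium is an NHE of $\Gamma$. Your write-up merely makes explicit two points the paper leaves implicit---that the reference point, and hence the pruned action set $B_i(x)$, is determined by the node's history and is stable under passage to subgames, and that NHP-compliant deviations are exactly the strategies of the restricted game---so it is a more detailed rendering of the same argument rather than a different one.
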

	
	\begin{proof}
		We fix an initial reference point $a_0$, a player function $I$, and some $k\in \mathbb{N}^+$.
		
		Notice that for every $a_0$, the game $\Gamma$ always possesses a pure subgame perfect equilibrium. This is true because $\Gamma$ is a well-defined finite extensive form game with perfect information. To see this, notice that the root of the game is $x_0$ where $S(x_0)=a_0$ and that every player function $I$ gives a unique player at every non-terminal node by construction of $\Gamma$. Because there are finitely many players and that $k$ is finite, the game $\Gamma$ ends after finitely many steps. This implies that there is always a subgame perfect equilibrium in pure strategies.
		
		Next, we assume that players act according to the NHP, which essentially puts a constraint on their choices in $\Gamma$. This implies that they have fewer (finitely many) choices under the NHP than they have under $\Gamma$. Because the NHP is common knowledge, the constrained game---i.e., the game in which all strategy profiles satisfy the NHP---is still of perfect information. Let $\sigma^*$ be a subgame perfect equilibrium in the constrained game, which exists by the same arguments as above. We note that $\sigma^*$ is an NHE in $\Gamma$ because $\sigma^*$ satisfies the NHP and at every node every active player plays a best response among the profiles that satisfy the NHP, since by construction all those profiles satisfy the NHP. This concludes the proof that $\sigma^*$ is an NHE.
	\end{proof}
	
	We next show under what conditions the uniqueness of the NHE outcome is guaranteed from an initial reference point.
	
	\begin{theorem}[Uniqueness]
		\label{thm:uniqueness}
		For every initial reference point $a_0\in A$, for every $k\in \mathbb{N}^+$, and every player function $I$, the NHE outcome associated to $a_0$ is unique.
	\end{theorem}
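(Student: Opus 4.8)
The plan is to realize the no-harm constraint as a static pruning of the extensive form game and then invoke the textbook backward-induction argument for uniqueness of subgame perfect outcomes, which applies here because the payoffs in $G$ are strict. The fact that makes the reduction work is that the reference point at a node is determined by that node alone, independently of the strategy profile: reading off the edge labels along the unique path from $x_0$ to a node $x$ tells us which of the earlier actions were `stay', `move' or `pass', and the reference point at $x$ is $S(y)$ for the last predecessor $y$ of $x$ at which a `stay' occurred, or $a_0$ if there is none. Hence whether a given action at $x$ satisfies the NHP depends only on $x$ — vacuously for `move' and `pass', and for a `stay' on comparing $u_j(S(x))$ with $u_j$ of the node-determined reference point for all $j\neq I(x)$.

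Given this, I would first define $\hat{\Gamma}$ to be the subtree of $\Gamma$ obtained by deleting every NHP-violating `stay' edge (together with the subtree it carries). Because NHP-compliance of an action at $x$ depends only on $x$, this $\hat{\Gamma}$ is well defined and depends only on $(a_0,k,I)$; it is a finite extensive form game with perfect information, and, since the `pass' edge is never deleted, every leaf of $\hat{\Gamma}$ is a terminal node of $\Gamma$ with a well-defined outcome in $A$. I would then check that the set of no-harm equilibria associated to $a_0$ coincides with the set of subgame perfect equilibria of $\hat{\Gamma}$: a strategy profile ``satisfies the NHP'' iff every action it prescribes, on or off path, is an edge of $\hat{\Gamma}$, i.e. iff it is a strategy profile of $\hat{\Gamma}$; and, again by node-determinacy of NHP-compliance, for a fixed $\sigma^*_{-i}$ a deviation $\sigma'_i$ keeps $(\sigma'_i,\sigma^*_{-i})$ NHP-compliant iff $\sigma'_i$ is a strategy of $\hat{\Gamma}$, so the constrained best-response comparison in Definition~\ref{def:NHE} ranges over exactly $i$'s strategies in $\hat{\Gamma}|x$. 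This is the reduction already used implicitly in the proof of Theorem~\ref{thm:existence}, but the direction NHE $\Rightarrow$ SPE of $\hat{\Gamma}$ needs this care.

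Second, I would run backward induction on $\hat{\Gamma}$. For a node $x$ let $O(x)\subseteq A$ be the set of outcomes induced by subgame perfect equilibria of $\hat{\Gamma}|x$; I would prove $|O(x)|=1$ by induction on the height of $x$ in $\hat{\Gamma}$. The base case of leaves is immediate. At a non-terminal $x$ with $i=I(x)$ and children $x_1,\dots,x_m$, the inductive hypothesis gives $O(x_\ell)=\{o_\ell\}$ with $o_\ell\in A$; any subgame perfect equilibrium must select an edge leading to a child maximizing $u_i(o_\ell)$, and since $u_i$ represents a strict ranking of $A$ it is injective on $A$, so all maximizing children carry one and the same outcome $o^*$ and $O(x)=\{o^*\}$. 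Applying this at the root shows that every subgame perfect equilibrium of $\hat{\Gamma}$ — equivalently, every no-harm equilibrium associated to $a_0$ — induces the single outcome $o^*$, and Theorem~\ref{thm:existence} supplies at least one, which gives uniqueness of the NHE outcome.

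I expect the only real work to be in the second paragraph: making precise that the reference point, and hence NHP-compliance of every action, is a function of the node and not of the strategy profile, so that the no-harm constraint is an honest pruning of the tree and the no-harm equilibrium concept collapses to plain subgame perfection of $\hat{\Gamma}$. Once that is granted, uniqueness of the outcome is the standard Zermelo/Kuhn backward-induction argument, with the strictness of the utility representation of $G$ supplying the ``no ties'' hypothesis it needs.
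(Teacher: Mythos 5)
Your proof is correct and takes essentially the same route as the paper: the paper likewise treats the NHP as a pruning of $\Gamma$ into a finite perfect-information constrained game (as in its existence proof) and then argues that strictness of the preferences over $A$ forces every subgame perfect (backward-induction) outcome of that constrained game to coincide, either through a unique best response or through ties that all lead to the same outcome. Your write-up just makes explicit two steps the paper leaves implicit, namely that the reference point, and hence NHP-compliance of an action, is determined by the node alone, and that the no-harm equilibria coincide with the subgame perfect equilibria of the pruned tree.
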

	
	\begin{proof}
		Given an initial reference point $a_0$, a finite $k$, and a player function $I$, the associated $\Gamma$ possesses a pure subgame perfect equilibrium as shown in the proof of Theorem~\ref{thm:existence}. We next show that this subgame perfect equilibrium outcome is unique. The reason is that no matter which player moves on a non-terminal node either (i) the player has a unique pure best response or (ii) the pure best responses all lead to the same outcome because the preferences of the players are strict in $G$. Thus, the subgame perfect equilibrium outcome in $\Gamma$ must be unique. Analogously, the subgame perfect equilibrium outcome in $\Gamma$ which is constrained by the NHP must also have a unique outcome. Together with Theorem~\ref{thm:existence}, this implies that the NHE outcome must be unique.
	\end{proof}
	
	Finally, we illustrate the relationship between the no-harm principle, rationality, and efficiency in $n$-person normal form games.
	
	\begin{theorem}[Efficiency]
	\label{thm:pareto}
	Let $G=(A,u)$ be a game. For every Pareto optimal outcome $a\in A$ there exists an initial reference point $a_0\in A$ such that for every $k\in \mathbb{N}^+$ and every player function $I$, the associated NHE outcome is $a$. Conversely, for every initial reference point $a_0\in A$, every $k\in \mathbb{N}^+$, every player function $I$, the associated NHE outcome is Pareto optimal. 
	\end{theorem}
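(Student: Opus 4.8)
The plan rests on one lemma, which I would prove first: in any no-harm equilibrium $\sigma^*$ of $\Gamma$ and at any node $x$, every player $j$ obtains at least $u_j(r)$, where $r$ is the reference point at $x$. The reason is that the reference point can change only when some player ``stays'' at a state different from the current reference, and such a stay must satisfy the NHP \emph{with respect to that reference}; hence if player $j$ simply never initiates a change of the reference — always choosing ``move,'' ``pass,'' or ``stay at the current reference'' (the last of which is vacuously NHP-admissible) — then on every continuation consistent with the NHP the reference is altered only by players other than $j$, so the NHP forces $u_j$ of the reference point to be non-decreasing along the play. Since the implemented outcome is the terminal reference point, this strategy guarantees $j$ at least $u_j(r)$, and since the resulting profile still satisfies the NHP, subgame perfection under the NHP constraint yields $u_j(\sigma^*|x)\ge u_j(r)$. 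I would also record the elementary fact that, because each $u_j$ is injective, an outcome that fails to be Pareto optimal is actually weakly Pareto dominated by an outcome that is strictly better for \emph{every} player.

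For the ``second-welfare-theorem'' direction, given a Pareto optimal $a$ I would take $a_0=a$. By the lemma applied at the root, where the reference is $a_0=a$, the NHE outcome $c$ satisfies $u_j(c)\ge u_j(a)$ for every $j$; since $a$ is Pareto optimal this forces $c=a$. The lemma holds for every $k$ and every player function $I$, so the NHE outcome associated to $a_0=a$ equals $a$ for all of them, and Theorem~\ref{thm:uniqueness} makes it the unique such outcome.

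For the ``first-welfare-theorem'' direction, I would argue by contradiction. Suppose the NHE outcome $c$ of some $\Gamma(a_0,k,I)$ is not Pareto optimal; by the injectivity remark there is $d$ with $u_j(d)>u_j(c)$ for every $j$. On the equilibrium path let $y$ be the last node at which the reference point is set to $c$ (if the reference is never changed, take $y=x_0$, where $c=a_0$); from $y$ onward the reference equals $c$ and the continuation value of every player is exactly $u_j(c)$. I would then exhibit a profitable NHP-deviation: from the successor of $y$, whenever a player is active they may move their own coordinate toward $d$, and because the reference stays no better than $c$ while $u_j(d)>u_j(c)$ for all $j$, ``staying at $d$'' (and at any intermediate step toward $d$ that has harmed no one relative to the reference) is NHP-admissible for everyone; running the NHP-constrained backward induction from $y$ against this background, the lemma keeps every player at or above $u_j(c)$ while the active players strictly prefer any outcome that improves on $c$, so the play is driven to an outcome at least as good as $d$, hence strictly better than $c$, for the deviating player — contradicting the optimality of that player's equilibrium action. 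Therefore $c$ is Pareto optimal.

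The step I expect to be the main obstacle is this last one: upgrading ``there is an NHP-admissible route from $y$ to $d$'' to ``some player strictly gains by taking it.'' A deviation is unilateral, so after the deviator heads toward $d$ the remaining players continue under $\sigma^*_{-i}$ and may, within the latitude the NHP allows, steer the state elsewhere or try to stall, and one must show this cannot leave the deviator with only $u_i(c)$ or less. The three model features I would lean on are: (i) a player can always change their own coordinate, so $d$ is at most $n$ player-turns away from any state and cannot be blocked; (ii) once the reference equals $c$, the NHP prevents any opponent's ``stay'' from implementing an outcome worse than $c$ for anyone, so the deviator cannot be punished below $u_i(c)$; and (iii) the near-equal-turns restriction on the player function together with the truncation parameter $k$ rules out indefinite cycling, forcing termination at a reference point that, by (i)--(ii), weakly dominates $c$ and, by a short backward induction using that everyone strictly prefers $d$ to $c$, can be taken to be at least as good as $d$ for the deviator. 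Getting the bookkeeping of the inherited game tree, the truncation rule, and the off-path reference points exactly right in this argument is where the real work lies.
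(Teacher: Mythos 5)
Your overall architecture matches the paper's. The security-level lemma (every player's NHE continuation value is at least the utility of the current reference point) is exactly the reasoning the paper uses inline, and with it your ``second-welfare'' direction (take $a_0=a$, invoke strict preferences to force the NHE outcome to equal $a$) is correct and in fact slightly cleaner than the paper's case distinction between players who did and did not stay. One small repair there: you derive the lemma from an NHP-admissible deviation by player $j$ at an arbitrary node, but the NHE condition only constrains the \emph{active} player at a node; for a non-active $j$ you must instead argue along the continuation path — only $j$'s own stays can lower the reference value for $j$, and at the first such node $j$ is active, so the guarantee argument applies there and then propagates back.

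The genuine gap is where you yourself locate it, and it is not mere bookkeeping. In the converse direction you assert that after the deviation ``the play is driven to an outcome at least as good as $d$,'' but the deviation is unilateral: the continuation is governed by $\sigma^*_{-i}$ and need not head toward $d$ at all. Your lemma only yields the weak bound $u_i(\cdot)\geq u_i(c)$, and with strict preferences the only way the deviation fails to be strictly profitable is if the continuation terminates at $c$ again (another player re-staying at the state $c$, or everyone passing so that the reference $c$ is implemented) — exactly the case you never exclude. Moreover your ingredient (i), ``a player can always change their own coordinate,'' is not automatic: under the truncation rule an action can be exhausted at a state, so feasibility of the escape path toward $d$ has to be proved, not assumed. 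The paper closes both holes with two steps absent from your sketch: it first passes to an outcome-equivalent NHE in which the game terminates at the \emph{first} opportunity a second player can stay at $c$ (so the relevant deviation is by the player asked to confirm $c$), and it then shows by backward induction that the constructed path from $c$ to $d$ (each active player plays their $d$-coordinate or passes, nobody stays) is disjoint from $[\sigma^*]$: if any node of that path, starting with the one whose state is $d$, lay on the equilibrium path, the active player there would profitably stay at $d$ (NHP-admissible since $d$ Pareto dominates the reference $c$), contradicting that the outcome is $c$. This disjointness is what guarantees the path's moves are still available despite $k$, and the same induction is what rules out the continuation stalling back at $c$ and delivers the strict gain. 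Without these two steps your contradiction does not go through.
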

	
	The proof of this theorem is in the Appendix~\ref{appendix}. Here we give an informal sketch of the proof.
	
	Given a player function $I$ and $k\in \mathbb{N}^+$, we first show that if an initial reference point $a_0$ is Pareto optimal then it is the NHE outcome from $a_0$. By way of contradiction, suppose that $a'\neq a_0$ is the NHE outcome. It implies that there exists at least one player who chose to stay (i.e., changed the reference point) in the path of play of an NHE, $\sigma^*$. Every player who did \textit{not} stay receives a strictly greater pay-off at $a'$ than $a_0$ because $\sigma^*$ satisfies the NHP. In addition, every player who did stay must, due to farsighted rationality and the NHP, receive a strictly greater pay-off at $a'$ than $a_0$. As a result, $a'$ Pareto dominates $a_0$, which contradicts the supposition that $a_0$ is Pareto optimal. 

    Second, we show that for an initial reference point $a_0$ that is not Pareto optimal, the NHE from $a_0$ must be Pareto optimal. By way of contradiction, suppose that $b$ is the NHE outcome from $a_0$ and $b$ is Pareto dominated by some action profile $a\neq b$. Let $\sigma^*$ be a NHE from $a_0$ such that  the first time a player stays at $b$ on the path of play of $\sigma^*$, the next player (say, $i$) also stays at $b$ by choosing $b_i$, hence terminating the game. Notice that if (i) there is a path from $b$ to $a$ along which the NHP is satisfied, then $b_i$ cannot be a best response of player $i$ because (ii) every player (including $i$) receives a strictly greater pay-off at $a$ than $b$ by our supposition that $b$ is Pareto dominated by $a$, and
    (iii) no other player can stay at an action profile which harms player $i$ along the path to $a$ since the NHP applies and $b$ is the reference point. We next show that (i) is true. First, notice that players can reach from $b$ to $a$ in at most $n$ moves by the following path of play. At every node, the active player $i$ plays move $a_{i}$ except when $a_{i}=b_{i}$, in which case player $i$ plays $p$ (i.e., pass). The NHP is not violated along this path of play because no player stays. Second, this path of play does not overlap with the path of play of $\sigma^*$ because if it did, then the active player at the overlapping node would have a profitable deviation to the path towards $a$. The reason is that if the state of the overlapping node $x$ is $a$, then the active player would have a profitable deviation from $\sigma^*$ to stay at $a$ and make $a$ the reference point because they are strictly better off at $a$ and staying at $a$ satisfies the NHP. By backward induction, the two paths of play cannot include an immediate predecessor $x'$ of node $x$ because the active player at $x'$ would have a profitable deviation to $x$, where the next player would stay. By analogous backward induction reasoning, one can conclude that the path of play from $b$ to $a$ and the path of play of $\sigma^*$ have an empty intersection. Thus, statement (i) holds as well. As desired, we reach a contradiction: $b$ cannot be the NHE outcome from $a_0$. 
	
	\subsection{Discussion of the assumptions}
	\label{subsec:discussion_results}
	
	We next discuss how different assumptions in the definition of $\Gamma$ and the NHE affect the results.
	
	\subsubsection{The no-harm principle}
	\label{subsec:roleof_NHP}
	
	To see why the NHP is essential for Theorem~\ref{thm:pareto}, first notice that the NHE definition would reduce to subgame perfect equilibrium if the no-harm principle were not assumed. Consider the following simple example and suppose that the NHP is \textit{not} assumed.
 
\begin{table}[h!]
	\begin{center}
  \setlength{\aboverulesep}{0pt}
\setlength{\belowrulesep}{0pt}
		\begin{tabular}{|c|c|c|}
			\midrule
			& L & R \\
			\midrule
			L & $4,3$ & $1,4$ \\
			\midrule
			R &	$2,1$ & $3,2$ \\
			\midrule
		\end{tabular}
	\end{center}
\end{table}
	Let the initial reference point be (1,4). Suppose that Row moves first, Column moves second, and this order strictly alternates. Row would not choose to pass at (1,4) because Column would then choose to pass too, making (1,4) as the outcome. On grounds of farsighted rationality, Row's best response is to move from (1,4) to (3,2), where Column as well as Row would stay, making it the  outcome. To see this, first notice that Column would not gain by moving to (2,1) from (3,2) because Row would not move to (4,3) as Row anticipates that Column would then go back to (1,4) where Row would have to either stay or pass because $k=1$. If Row stays at (1,4), then Column would simply make (1,4) the outcome by staying too. If Row passes at (1,4), then Column would also pass, making (1,4) the outcome. Second, notice that Row would not move back to (1,4) from (3,2), because Column would then stay there. Thus, without the no-harm principle and starting at (1,4), players would end up at (3,2), and this is Pareto dominated by (4,3).

	\subsubsection{Farsighted rationality}
	\label{subsec:rationality}
	
	Farsighted rationality is also a necessary assumption for Theorem~\ref{thm:pareto} because a strategy profile might satisfy the no-harm principle alone and yield a Pareto inferior outcome with respect to the initial reference point. The following $2\times 2$ game provides a simple example.
 
\begin{table}[h!]
	\begin{center}
  \setlength{\aboverulesep}{0pt}
\setlength{\belowrulesep}{0pt}
		\begin{tabular}{|c|c|c|}
			\midrule
			& L & R \\
			\midrule
			L & $2,2$ & $0,3$ \\
			\midrule
			R &	$1,0$ & $4,4$ \\
			\midrule
		\end{tabular}
	\end{center}\end{table}
	Suppose that the reference point is (2,2), Column moves first, Row moves second, and this order strictly alternates. Consider the strategy profile in which Column moves from (2,2) to (0,3) where Row stays, making (0,3) the updated reference point. Row's choice of L satisfies the no-harm principle since it does not harm Column player. Next, Column moves back to (2,2) and Row moves to (1,0) where first Column stays and then Row stays, making (1,0) the outcome. Column's decision to stay at (1,0) satisfies the no-harm principle since it does not harm Row player with respect to the updated reference point (0,3). Anticipating this and if the players were farsightedly rational, Column would \textit{not} stay at (1,0). But in the absence of the assumption of rationality, the aforementioned moves cannot be ruled out and it results in an outcome, (1,0), that is strictly Pareto dominated by (2,2).
	
	\subsubsection{Normal form structure}
	\label{sub:structure_normal_form}

    The normal form structure of game $G$ is also necessary for Theorem~\ref{thm:pareto}. We now assume both the NHP and farsighted rationality and illustrate this with the extensive form game given in Figure~\ref{fig:3player_forced}.
	
	Suppose that the initial reference point pay-off profile is (1,1) and player 1 moves first. There is a unique NHE in this game and it is Pareto dominated. To see this, notice that the best response of player 1 is to choose m, moving to (3,2) because if player 1 chooses to pass (i.e., p), then the best response of player 2 would be to move to (2,4), which would satisfy the NHP with respect to (1,1). Thus, player 1 moves to (3,2), where player 2 stays (or passes). The NHE outcome (3,2) coincides with the SPE outcome in this game. Although (3,2) is a Pareto improvement over the reference point (1,1), it is Pareto dominated by (4,3).

	One reason why the NHP and farsighted rationality of players do not immediately imply Pareto optimality is that the NHP puts a mild constraint on the behaviour of players. It restricts players from causing harm to others relative to the reference point, but beyond that the NHP does not require players to maximize the pay-off of others.

\begin{figure}
	\centering
	\resizebox{.93\textwidth}{!}{
	\begin{tikzpicture}[scale=.85, font=\footnotesize,edge from parent/.style={draw,thick}, decoration={markings, mark=at position 0.5 with {\arrow{>}}}
	]
	\tikzstyle{solid node}=[circle,draw,inner sep=1.2,fill=black];
	\tikzstyle{hollow node}=[circle,draw,inner sep=1.2];
	\tikzstyle{level 1}=[level distance=15mm,sibling distance=50mm]
	\tikzstyle{level 2}=[level distance=15mm,sibling distance=25mm]
	\tikzstyle{level 3}=[level distance=15mm,sibling distance=15mm]
	\node(0)[hollow node]{}
	child{node[solid node]{}
		child{node[below]{$\begin{pmatrix} 3\\2\end{pmatrix}$}
			edge from parent[postaction={decorate}, black, solid, very thick] node[above left]{s}
		}
		child{node[below]{$\begin{pmatrix} 3\\2\end{pmatrix}$}
			edge from parent[black, solid, thick] node[above right, yshift=-2]{p}
		}
		edge from parent[postaction={decorate}, black, solid, very thick] node[above right, yshift=2]{m}
	}
	child{node[below]{$\begin{pmatrix} 1\\1\end{pmatrix}$}
		edge from parent[thick] node[above right]{s}
	}
	child{node[solid node]{}
		child{node[below]{$\begin{pmatrix} 4\\3\end{pmatrix}$}
			edge from parent[thick] node[above right, yshift=3]{a}
		}
		child{node[below]{$\begin{pmatrix} 2\\4\end{pmatrix}$}
			edge from parent[postaction={decorate}, black, solid, very thick] node[left]{b}
		}
		child{node[below]{$\begin{pmatrix} 1\\1\end{pmatrix}$}
			edge from parent[thick] node[right]{s}
		}
		child{node[below]{$\begin{pmatrix} 1\\1\end{pmatrix}$}
			edge from parent[thick] node[above right, yshift=-2]{p}
		}
		edge from parent[thick] node[above right,yshift=-2]{p}
	}
	;
	\node[above,yshift=2]at(0){1};
	
	\node[above,yshift=2]at(0-1){2};
	\node[above,yshift=2]at(0-3){2};
	
	\end{tikzpicture}
	}
			\caption{An extensive form game in which the NHE outcome from reference point (1,1) is (3,2), which is Pareto dominated.}
\label{fig:3player_forced}
\end{figure}

	\subsubsection{Unilateral termination}
	\label{sub:termination}
	
	In this section, we consider the modification of our model where each player has the opportunity to unilaterally terminate the game. In that case, our results would remain valid as long as the NHP applies to `termination' decisions as well. Consider the model presented in section~\ref{sec:NHP} with the modifications outlined below, holding everything else fixed. For every $i$ and every non-terminal $x$, let $A_i(x)\cup \{t\}$ be player $i$'s set of available actions at node $x$. If a player plays action $t$ at $x$, then the terminal node is reached and the outcome is defined as $a$ where $a=S(x)$. In section~\ref{sub:reference_points}, drop the terminal node condition (i) where the game terminates if two players choose to stay at a state. Add the following line to Definition~\ref{def:NHP_action}. Action $\sigma_i(x)=t$ is said to satisfy the no-harm principle (NHP) at $x$ if for every $j\neq i$, $u_j (a)\geq u_j(b)$. Under this modification of our model, the NHE associated with an initial node $a_0$ may differ from the NHE under the original setup. However, all three of our theorems would remain valid for the analogous reasons to the ones used in the proofs of respective theorems.
	
	One could also consider the following modification to our model in section~\ref{sec:NHP}. Suppose that the game terminates if $m$ players ($n\geq m>2$) choose to stay at a state instead of two players as is assumed in condition (i) in section~\ref{sub:reference_points}. This modification would not affect the application of the main arguments in the proofs of the three theorems. Thus, the theorems would remain valid in under this modification too.
	
	\subsubsection{Strict vs weak preferences}
	\label{subsec:roleof_preferences}
			
	One might wonder what happens to the no-harm equilibria when there are indifferences between the outcomes in $G$. In that case, Theorem~\ref{thm:existence} would remain valid, though Theorem~\ref{thm:uniqueness} would no longer hold. This is because subgame perfect equilibrium outcomes in $\Gamma$ need not be unique, which implies that NHE outcomes need not be unique either. For analogous reasons as in the proof of Theorem~\ref{thm:pareto} we can conclude that every Pareto optimal profile must be an NHE outcome, and an NHE outcome cannot be strictly Pareto dominated. Moreover, for every initial reference point, for every $k$, and for every $I$ there would always be an NHE that is Pareto optimal.
	
	\subsubsection{The player function and $k$}
	\label{subsec:roleof_player}
			
	While the three theorems hold for any $k\geq 1$ and any player function $I$, the associated NHE would potentially be different for different $k$ and $I$ (see, e.g., the example in subsection~\ref{subsec:3person}). However, this does not change the conclusion of, e.g., Theorem~\ref{thm:pareto} that any such NHE is Pareto efficient.
		
	In section~\ref{sec:NHP}, we put a restriction on player function $I$ that in every path of play each player has more or less equal number of nodes at which they are active. We next show that Theorem~\ref{thm:pareto} would not hold in general if we let the player function be arbitrary. Let $I'$ be a player function such that for every non-terminal node $x\in X$, $I'(x)=1$. Clearly, Theorem~\ref{thm:pareto} would not hold if the player function were $I'$. To see this, consider the PD with the initial reference point DD. Then, player 1 cannot by themself move to CC. Thus, player 1 would stay at DD, which is Pareto dominated.
		
	A different way to interpret the player function $I$ is that it may be chosen by Nature in the beginning of the game according to the stochastic process described below. Fix a game $G=(A,u)$ and action profile $a_0\in A$. Let $q\in \Delta N$ be a probability distribution over the set of players $N$ such that for every player $i$, $q(i)>0$ and $\sum_{i}q(i)=1$. For a given natural number $k\in \mathbb{N}^+=\{1,2,...\}$, and probability distribution $q\in \Delta N$, we define an associated extensive form game with perfect information and Nature move denoted by $\Gamma'(a_0, k, q)=(N, X' , I', u', S', H')$. At the root, $x'_0$, of $\Gamma'(a_0, k, q)$, Nature randomly chooses a player function $I$, and then players play the game $\Gamma(a_0, k, I)=(N, X , I, u, S, H)$.
	
	Let $I':X'\to N$ denote the player function in $\Gamma'(a_0, k, q)$, where $I'(x)$ is the active player at node $x\in X'$. At $x'_0$, Nature chooses player function $I:X\to N$, where $X\subset X'$, according to the following process. The probability player $j$ is the active player at a non-terminal node $x_m$ is given by the conditional probability $P(j|x_{m-1})$, where $x_{m-1}$ is the immediate predecessor of $x_m$, which is defined as follows. Let $\underline{m}=floor(\frac{m}{n})$. If $x_{\underline{m}n+1}=m$, then $P(j|x_{m-1})=q(j)$. If $x_{\underline{m}n+1}<m$, then 	
	\[
	P(j|x_{m-1})=
		\begin{cases}
			0, ~\text{if}~j= I'(x_{\underline{m}n+1}),~\text{or}~j=I'(x_{\underline{m}n+2}),...,~\text{or}~j=I'(x_{m-1})\\
			\frac{q(j)}{\sum_{i} q(i)-\sum_{i=\underline{m}n+1}^{m-1} q(I(x_{i}))},~\text{else}.
		\end{cases}
	\]
	
	Notice that the player function $I$ defined as above satisfies the restriction we put in section~\ref{sec:NHP}. Thus, irrespective of the realisation of Nature's randomisation, the three theorems would remain valid in $\Gamma(a_0, k, I)$.

	\section{Illustrations}
	\label{sec:illustrations}
	
	\subsection{The Prisoners' Dilemma}
	
	We first go back to the PD. It follows CC, CD and DC are Pareto efficient for some reference points and so all are NHEs, but DD is not Pareto efficient and is not an NHE.  Nevertheless, although it is clear DD is not Pareto efficient, it is perhaps not immediately obvious why deviation from DD satisfies both (1) NHP and (2) farsighted rationality. 
	
	Let $\Gamma(a_0, k, I)$ be the extensive form game that begins with the reference point $a_0=$ DD. Suppose that Row moves first, Column moves second, and this order strictly alternates. Consider the deviation from DD by Row to C. This deviation may seem to be precluded because it does not immediately satisfy Row's farsighted rationality---since Row is worse off---at CD. Nevertheless, to see whether it might satisfy Row's farsighted rationality (2), we need to consider what Column does at CD because CD may not be the stopping point. Indeed, Column cannot stay at CD because CD harms Row relative to the reference point of DD and so will not satisfy the no-harm principle (1). Hence if Column were to find themselves at CD, they would have to move and CC is the only option. Will Row stay at CC? CC satisfies (1) the NHP. It also satisfies Row's farsighted rationality (2) because a move to DC would produce a `cycle' back to DD from which no further deviation would be permitted because $k=1$. Thus since CC is better for Row than DD, it is also farsightedly rational for Row to stay at CC, which becomes the outcome of the NHE. In other words, you have to trace through what happens with a deviation by Row using DD as the reference point before you can see that (2) is also satisfied by the deviation of Row to C from reference point DD; and DD is not an NHE. Instead, CC is the NHE associated with the reference point of DD.
	
	As mentioned earlier, CD and DC are also NHE outcomes in the PD; but they are only NHEs when respectively the initial reference points are CD and DC.
	
	\subsection{Stag-Hunt and Hawk and Dove}
	
	It is well-known that Pareto optimality and the Nash equilibrium are logically distinct concepts in the sense that neither concept is a refinement of the other. As we show in Theorem~\ref{thm:pareto} the NHEs coincide with Pareto optimal profiles. Thus, there is no logical relationship between the set of NHEs and the set of Nash equilibria. Two further illustrations in this sub-section bring this out. In the Stag-Hunt game, NHE is a case of Nash refinement, and in the Hawk-Dove game, NHE expands the Nash equilibria; whereas, as we have just seen, in the PD the Nash equilibrium is not an NHE. 
	
	Consider, first, the Stag-Hunt game:
\begin{table}[h!]
	\begin{center}
  \setlength{\aboverulesep}{0pt}
\setlength{\belowrulesep}{0pt}
		\begin{tabular}{|c|c|c|}
			\midrule
			& Stag & Hare \\
			\midrule
			Stag & $4,4$ & $1,3$ \\
			\midrule
			Hare &	$3,1$ & $2,2$ \\
			\midrule
		\end{tabular}
	\end{center}
	\end{table}
 
	Clearly, irrespective of the reference point the players will end up at (Stag, Stag), which is the Pareto dominant profile and also a Nash equilibrium. (Hare, Hare) is a Nash equilibrium but not an NHE.
 
	Next, in the Hawk and Dove (Chicken) game, (Dove, Dove) is a NHE as well as the two Nash equilibria (H,D) and (D,H):
		
  \begin{table}[h!]
	\begin{center}
  \setlength{\aboverulesep}{0pt}
\setlength{\belowrulesep}{0pt}
		\begin{tabular}{|c|c|c|}
			\midrule
			& Hawk & Dove \\
			\midrule
			Hawk & $1,1$ & $4,2$ \\
			\midrule
			Dove &	$2,4$ & $3,3$ \\
			\midrule
		\end{tabular}
	\end{center}
		\end{table}
		
	This is an interesting game in that both non-myopic equilibrium and NHE predictions coincide. The two concepts in general give different predictions mainly due to the no-harm principle. In general, not every non-myopic equilibrium is an NHE such as (D,D) in the PD. Conversely, not every NHE is a non-myopic equilibrium because not every Pareto optimal outcome is a non-myopic equilibrium such as (A,D) in game 22 (Brams, 1994), which is given below. 
		
  \begin{table}[h!]
	\begin{center}
  \setlength{\aboverulesep}{0pt}
\setlength{\belowrulesep}{0pt}
		\begin{tabular}{|c|c|c|}
			\midrule
			& C & D \\
			\midrule
			A & $2,4$ & $3,3$ \\
			\midrule
			B &	$1,2$ & $4,1$ \\
			\midrule
		\end{tabular}
	\end{center}
		\end{table}
		
	The Pareto optimal profiles in this game are (A,D), (B,D), and (A,C), which is the non-myopic equilibrium.

	\subsection{A three-person illustrative example}
	\label{subsec:3person}
	
	We next illustrate the no-harm equilibria in a three-person game presented in Figure~\ref{fig:3player}. Throughout this example, we assume that the initial reference point is (A,D,E).

	Assume that Row moves first, Column second, Matrix third, and this order strictly alternates. Figure~\ref{fig:alternatingoffers} illustrates part of the game tree where the arrows show the on-path moves of the NHE, which can be described as follows. Row moves to (8, 8, 4), and Column stays at (8, 8, 4), which makes it the reference point. A best response of Matrix is to stay at (8, 8, 4), making it the outcome of the NHE from (3, 1, 2). Notice that Matrix can move to (4, 4, 5), but cannot stay in matrix F because this would violate the NHP with respect to the reference point (8, 8, 4).
	\begin{figure}
		\begin{center}
   \setlength{\aboverulesep}{0pt}
\setlength{\belowrulesep}{0pt}
			\begin{tabular}{|c|c|c|}
				\midrule
				E & C & D \\
				\midrule
				A & $1, 6, 1$ & $3, 1, 2$ \\
				\midrule
				B &	$2, 7, 3$ & $8, 8, 4$ \\
				\midrule
			\end{tabular}
			\quad
			\begin{tabular}{|c|c|c|}
				\midrule
				F & C & D \\
				\midrule
				A & $5, 2, 6$ & $7, 5, 8$ \\
				\midrule
				B &	$6, 3, 7$ & $4, 4, 5$ \\
				\midrule
			\end{tabular}
		\end{center}
		\caption{No-harm equilibria in a three-person illustrative game}
		\label{fig:3player}
	\end{figure}
 
    Now, assume that Matrix moves first, Column second, Row third, and this order strictly alternates. The initial  reference point is (3, 1, 2) as before. We explain the on-path actions of the NHE as follows. Matrix moves to (7, 5, 8) by playing F, where both Column and then Row stay. The reason why it is a best response for Column to stay at (7, 5, 8) is that (i) Column receives their highest pay-off in matrix F, (ii) Matrix player would prefer to stay at any outcome in matrix F rather than moving to matrix E, and (iii) every outcome in matrix F satisfies the NHP with respect to the reference point (3, 1, 2). For analogous reasons, it is also a best response for Row to stay at (7, 5, 8). Thus, (7, 5, 8) is the outcome of the NHE from the initial reference point (3, 1, 2).

		\begin{figure}
			\centering
			\resizebox{.96\textwidth}{!}{
			\begin{tikzpicture}[font=\footnotesize,edge from parent/.style={draw,thick}, decoration={
						markings,
						mark=at position 0.5 with {\arrow{>}}}
			]
			\tikzstyle{solid node}=[circle,draw,inner sep=1.2,fill=black];
			\tikzstyle{hollow node}=[circle,draw,inner sep=1.2];
			\tikzstyle{level 1}=[level distance=15mm,sibling distance=50mm]
			\tikzstyle{level 2}=[level distance=15mm,sibling distance=25mm]
			\tikzstyle{level 3}=[level distance=15mm,sibling distance=15mm]
			\node(0)[hollow node]{}
			child{node[solid node]{}
				child{node[solid node]{}
					child{node[below]{} edge from parent[shorten >=0.7cm, dotted, thick] node[left]{}}
					edge from parent[black, solid, thick] node[above left]{C}
				}
				child{node[solid node][below]{}
						child{node[solid node]{}
							child{node[below]{} edge from parent[shorten >=0.7cm, dotted, thick] node[left]{}}
							edge from parent[black, solid, thick] node[above left]{F}
						}
						child{node[below]{$\begin{pmatrix} 8\\8\\4\end{pmatrix}$}
							edge from parent[postaction={decorate}, black, solid, very thick] node[below left]{E}
						}
						child{node[solid node][below]{}
							child{node[below]{} edge from parent[shorten >=0.7cm, dotted, thick] node[left]{}}
							edge from parent[black, solid, thick] node[above right]{$p$}
						}					
					edge from parent[postaction={decorate}, black, solid, very thick] node[below left]{D}
				}
				child{node[solid node][below]{}
					child{node[below]{} edge from parent[shorten >=0.7cm, dotted, thick] node[left]{}}
					edge from parent[black, solid, thick] node[above right]{$p$}
				}
				edge from parent[postaction={decorate}, black, solid, very thick] node[above left, yshift=-3]{B}
			}
			child{node[solid node, below]{}
				child{{}
					edge from parent[shorten >=0.9cm, dotted, thick] node[above left]{}
				}
				edge from parent[thick] node[right]{A}
			}
			child{node[solid node]{}
				child{{}
					edge from parent[shorten >=0.9cm, dotted, thick] node[above left]{}
				}
				edge from parent[thick] node[above right, yshift=-4]{$p$}
			}
		;
			\node[above,yshift=2]at(0){$R$};
			
			\node[above,yshift=2]at(0-1){$C$};
			\node[right,yshift=3]at(0-2){$C$};
			\node[right,yshift=3]at(0-3){$C$};
			
			\node[left,yshift=3]at(0-1-1){$M$};
			\node[right,yshift=3]at(0-1-2){$M$};
			\node[right,yshift=3]at(0-1-3){$M$};
			\end{tikzpicture}
			}
			\caption{Part of the game tree of $\Gamma$ presented in Figure~\ref{fig:3player} where the arrows illustrate the NHE path. Row moves first, Column second, and Matrix third. (The full game tree is not shown due to space constraints.)}
			\label{fig:alternatingoffers}
		\end{figure}	
	
	At the outset, it looks like Row and Column should be able to implement their most preferred outcome (8, 8, 4) in the game. However, as shown above this is not possible if Matrix is the first-mover at the initial reference point. This three-person example illustrates that the player function $I$ can affect the NHE associated with any reference point, but $I$ does not affect the conclusion that the NHEs are Pareto optimal.

	\section{Discussion}
	\label{sec:discussion}
	
	\subsection{Rule-like constraints on individual action}

The no-harm principle is a rule-like constraint on individual action that someone who believes in or subscribes to classic liberalism will wish to follow. This contrasts with the models where individuals are altruistic or have other kinds of social preference that can transform a PD and predict CC. In these social preference models, an individual typically personally values the material pay-offs enjoyed by others and this enjoyment typically grows/falls with the size of the material pay-off to others. By taking account of these other regarding preferences, a PD game in terms of material pay-offs is transformed into a different game in terms of the utility number pay-offs; and CC can become a Nash equilibrium in this transformed utility pay-off version of the interaction. With the no-harm principle, we begin with utility pay-offs so as to have an encompassing definition of harm and there is no analogous relationship whereby one person’s utility varies with that of another’s. Instead, an action either satisfies the no-harm principle or it does not and in the one case the action is permissible and in the other it is not. In this respect the no-harm principle is akin to a version of rule rationality: that is, if the language of preference satisfaction is retained individuals have a lexicographic preference for following a rule(s) and so when they act to satisfy their preferences, they act in accordance with the rule(s).
	
	Another example of rule rationality is provided by Kant's categorical imperative: to `act only according to that maxim whereby you can at the same will that it should become a universal law'. The rule-like constraint in this instance is that the action is universalized (and is evaluated under this constraint independently of whether others actually take the same action). It is well-known that Kantian rule rationality can produce similar results to those we have derived for the no-harm principle in prisoners' dilemma interactions (e.g. see Roemer, \citeyear{roemer2010}, in the economics literature and for a more general discussion, O'Neill, \citeyear{oneill1989}). The difference is that the Kantian rule has a more controversial connection to classical liberalism than the no-harm principle (e.g. see Berlin, \citeyear{berlin1969}); and, to our knowledge, the no harm principle has not been studied before in non-cooperative game theory, whereas the Kantian one has (e.g. see Roemer, \citeyear{roemer2010}). 
	
	\subsection{No-harm principle in social choice theory}
	
	The no-harm principle is related in classical liberal political philosophy to the presumption that the State should not intervene in individual decision making when the consequences of those decisions apply only to the individual(s) making the decisions. This  non-intervention principle has, of course, since \cite{sen1970} featured prominently in the social choice literature. The no-harm principle has also been used more recently in this social choice literature (see, e.g., Lombardi et al., \citeyear{lombardi2016}). Mariotti and Veneziani (\citeyear{mariotti2009}; \citeyear{mariotti2013}; \citeyear{mariotti2020}) introduce a notion called ``Non-Interference'' principle which roughly says that society's preferences should not change following a change in circumstances that affect only one individual and for which everyone else is indifferent.  Recently, \cite{mariotti2020} show that there is inconsistency between their ``Non-Interference'' principle and the Pareto principle (i.e., if everyone in a society prefers an alternative $x$ to $y$, then society should prefer $x$ to $y$) in a non-dictatorship.
	
	Our formalization of the NHP differs from Mariotti and Veneziani's in two main respects, the framework and the conceptual definition. The most obvious difference between the two principles is that ours applies to actions within a game theoretical framework whereas theirs applies to the preferences within a social choice context. Conceptually, under our no-harm principle a player is allowed to choose any action as long as this action does not eventually harm (and may benefit) other players with respect to the reference point. However, the ``Non-Interference'' principle does not apply to a change in social situations that leave some members of the society better off.

	Although we share the interest in the implications of subscribing to the tenets of classical liberalism with the social choice literature, the approach here is very different.  We are not interested in the implications of classical liberalism for a social planner---as is the case in the social choice literature. Instead, we are concerned with how the introduction of the no-harm principle as a constraint on individual decision making in games affects the equilibrium outcomes of those games.

	\subsection{Pareto efficiency}
	We can interpret our framework as a set of necessary and sufficient assumptions that give a non-cooperative characterization to Pareto efficiency via the NHP.
	
	In a recent and related development, \cite{che2020} provided a characterization of the Pareto optima via utilitarian welfare maximization. While both our and their approaches are sequential in nature, the main difference between the two papers is that their framework is non-strategic whereas we provide a non-cooperative foundation for Pareto optimality via the no-harm principle.
	
	\cite{ray2020} recently introduced a general class of games called ``games of love and hate'' to describe strategic situations with pay-off-based externalities. They show that every Nash equilibrium in this class is Pareto optimal under some regularity conditions; though, not every Pareto optimal profile is a Nash equilibrium. Note that both their and our approaches ensure that all equilibria are Pareto efficient. \cite{ray2020} make certain assumptions on the pay-off functions of the players (e.g., prisoners' dilemma falls outside of that class), whereas we assume that players maximize utility subject to the rule-like constraint of the no-harm principle.
	
	In modeling the no-harm principle, we have followed the approach of \cite{brams1994}. In this same tradition, \cite{brams2021} show that there is always a non-myopic equilibrium that is Pareto optimal; though, not all Pareto optimal profiles are non-myopic equilibria, and not all non-myopic equilibria are Pareto optimal, as we discussed in section~\ref{sec:illustrations}. The main difference in our results comes from the no-harm principle that we assume, which restricts the actions of the players to ones that satisfy the well-known principle of classical liberalism.
	
	\subsection{Welfare economics}
	\label{subsec:welfare}
	
	There is an analogy between Theorem~\ref{thm:pareto} and the first and the second welfare theorems. The first fundamental theorem of welfare economics states that, roughly speaking, irrespective of the set of initial endowments the competitive  equilibrium is Pareto efficient. In our setting, Theorem~\ref{thm:pareto} says that for every initial reference point in the society the associated NHE is always Pareto efficient. The second fundamental theorem of welfare economics states that any Pareto efficient allocation can be achieved as a competitive equilibrium allocation for some set of initial endowments. Analogously, by Theorem~\ref{thm:pareto} for every Pareto efficient profile in a strategic game there is an initial reference point for which this is the  NHE. 
	
	While these theorems point to a similar outcome, the processes and assumptions behind them are quite different.  There is no account of how a competitive equilibrium is reached in general equilibrium theory and, critically, a competitive general equilibrium assumes that agents are price takers. In short, there is no strategic interaction in the sense of game theory. 
	
	\subsection{Related frameworks in the literature}
		
	In this sub-section, we discuss the seemingly unrelated fields to which our framework is closely related. To the best of our knowledge, the NHP is not studied in these frameworks.
		
	\subsubsection*{Theory of Moves}
	As mentioned earlier, our framework is closely related to Brams's (1994) seminal work. Starting from \cite{brams1981}, non-myopic equilibrium has been developed and extended under different assumptions and domains; see, e.g., \cite{kilgour1984}.
	
	\subsubsection*{Oligopolistic markets}
	\cite{marschak1978} study oligopolistic markets in which there is an initial status quo of price/quantity within a normal-form game, and firms can unilaterally change their actions starting from this status quo, observe the history of changes, and react to them. Like in our framework, firms only care about the final outcome, ignoring any transitory profits.
	
	\subsubsection*{Stochastic games}
	\cite{shapley1953} first introduced this well-studied class of non-cooperative games. Our framework is most related to a sub-class of stochastic games called recursive games of perfect information in which there are a finite number of states where each player is ``active'' at some state and can ``switch'' to another state (see, e.g., Flesch, Kuipers, Schoenmakers, and Vrieze,  \citeyear{flesch2010}, and the references therein). At each state, the active player has the option to either switch from the current state to another state or ``quit,'' in which case the state payoffs are collected. For a similar class of games called Dynkin games, see, e.g., \cite{solan2003}.
	
	\subsubsection*{Farsightedness in cooperative games}
	Since the ground-breaking book of \cite{neumann1944}, cooperative games have been applied in many contexts. There is a sub-field of cooperative games in which, like in our framework, a strategy profile in a normal-form game can be considered as a state. A coalition, which may be an individual, can `move' from one state to another for which they are `effective'. One can then define myopic/farsighted notions of core and stable sets, abstracting away from strategic considerations. It is impossible to do this literature justice, but for a non-exhaustive list, see, for example, \cite{moulin1982,greenberg1990,ray2015,koray2018}, and \cite{bloch2021}. We refer the interested reader to \cite{bloch2021} who provide a comprehensive review of the relevant literature in this sub-field.
	
	\subsubsection*{Alternating-offers bargaining and cheap talk in normal-form games}
	
	Our framework relates to non-cooperative cheap talk and bargaining games in which the set of alternatives corresponds to the set action profiles in a normal-form game. Players can move sequentially and make proposals, which can then be accepted or rejected. The normal-form structure underlying these games distinguishes them from Rubinstein's (\citeyear{rubinstein1982}) seminal bargaining game with a discount rate. For a non-exhaustive list, see, e.g., \cite{kalai1981}, \cite{farrell1988}, and \cite{santos2000}. For a thorough literature review, we refer the interested reader to \cite{fukuda2022} and the references therein.
	
	\section{Conclusion}
	\label{sec:conclusion}
	
	Game theory standardly makes no assumption about what motivates individuals to act other than they have preferences they seek to satisfy. While this is an admirably parsimonious assumption, it is also misleading when people either subscribe to the political philosophy of classical liberalism or live in a society that is legally founded on the principles of classical liberalism. Such people are additionally constrained, either legally or by their own beliefs, by the no-harm principle. This is because the principle is the key constraint placed on the exercise of individual freedom by J. S. Mill in his classic manifesto for individual liberty: \textit{On Liberty}. Thus, for those who live in a classically liberal society and/or who believe in classical liberalism, a question naturally arises: how is behaviour in games affected by the additional individual constraint on action supplied by the no-harm principle? We offer part of an answer to this question. 
	
	We show with our operationalization of the no-harm principle that this addition dramatically alters the predictions regarding what happens in games. In particular, the no-harm equilibria are always Pareto optimal. This stands in marked contrast to standard game theory where there is no necessary connection between Nash equilibria and Pareto optimality. It is important in the derivation of this result to note that our operationalization of the no-harm principle does not require Pareto optimality; nor does it even secure Pareto improvements from a starting position. It is the combination of farsighted rationality with the no-harm principle that secures Pareto optimality. 
	
	Our paper opens up two main directions for future theoretical research. First, what are the other frameworks in which strategic foundations of Pareto optimality can be studied? Second, while the definitions of no-harm principle and the NHE can be extended to games under incomplete and imperfect information in a straightforward way as subgame perfect equilibrium is well-defined under these settings, can it be extended to games with infinite horizons? 
	
	It also suggests an important new direction for empirical research that has possible implications for public policy. It is well known from experiments, for example, that some people behave selfishly in public goods/PD interactions and others behave pro-socially by contributing to the public good. The pro-social contributions are typically understood through the prism of social preferences and selfishness is understood through the absence of such preferences. To what extent, then, might they be better understood through the differing sway or influence that the no-harm principle has on individuals? In particular, while the puzzle from these experiments from the perspective of standard game theory has centred on why subjects contributed anything to the public good, it changes with the result of this paper. Rather, the puzzling question becomes: why do so many subjects in these experiments, when they come from liberal societies, behave selfishly?\footnote{\cite{amadae2016} has an answer to this question: the rise and influence of Game Theory. In fact, the seeds of the analysis in this paper were sown by \cite{amadae2016} and \cite{heap2016}. Amadae (2016) argues that game theory has encouraged a form of neo-liberalism that is distinct from classical liberalism precisely because game theory dispenses with the no-harm principle. She conjectures that the no-harm principle would dramatically alter the prediction of what rational individuals would do in a prisoners' dilemma. Hargreaves Heap (2016) reviewed this book and found this conjecture intuitively plausible and so, in effect, reproduced it in the review.} This, in turn, connects  to the new policy agenda suggested by this paper: we need, when considering policy interventions, to understand better and focus on the circumstances under which people are not guided by the no-harm principle in liberal democratic societies.
	
	To put this last point slightly differently, it is often argued that the prisoners' dilemma helps explain why people decide to restrict their freedoms (e.g. Hobbes, \citeyear{hobbes1651}). From the perspective of this paper, it is not the occurrence of prisoners' dilemmas in social and economic life, a sort of brute fact about some types of interactions, that occasions this retreat from liberty. Rather, it is a retreat from the liberal conception of liberty that is responsible for making prisoners' dilemma interactions problematic; and from a policy perspective, it is important to get the source of the problem right.
	\section*{Appendix}\label{appendix}
	
	\subsection*{Proof of Theorem~\ref{thm:pareto}}
	\label{sec:proof_thm3}
	
	\noindent \textit{Proof of the first part.}
	We first show that if an initial reference point $a_0$ is Pareto optimal, then for every $k$, and every player function $I$, the associated NHE outcome is $a_0$ in game $\Gamma(a_0, k, I)$, proving the first part of the theorem. 
	
	To reach a contradiction, suppose that $a_0$ is not the NHE outcome, and the outcome of an NHE $\sigma^*$ from $a_0$ is given by some $a'\neq a_0$. We know that the NHE outcome from each initial reference point is unique by Theorem~\ref{thm:uniqueness}. Because $a'\neq a_0$ and $a'$ is the outcome of $\sigma^*$, it must be that $R(\sigma^*)\setminus \{a_0\}$ is non-empty. Too see this, suppose that $R(\sigma^*)=\{a_0\}$. It implies that $a'= a_0$, which is a contradiction. Thus, there exists at least one player who stayed along the path of play of $\sigma^*$, that is, changed a reference point in $R(\sigma^*)$. For every player $j$ who did \textit{not} stay along the path of play of $\sigma^*$, it must be that $u_j(a')>u_j(a_0)$. This is because for every reference point $a\in R(\sigma^*)$,  $u_j(a)\geq u_j(a_0)$ due to the fact that $\sigma^*$ satisfies the NHP. For every player $i$ who did stay along the path of play of $\sigma^*$, it must be that $u_i(a')>u_i(a_0)$ due to two reasons. 
	
	First, player $i$'s pay-off cannot be diminished before $i$ stays because every preceding stay decision must satisfy the NHP. To see this, let $x\in X$ be the node such that player $i$ stays for the first time at $a\in R(\sigma^*)$, and $b$ be the reference point at $x$. Then, it must be that $u_i(b)> u_i(a_0)$ due to the NHP, that is, other players could not have stayed and harmed player $i$.
	
	Second, it would not be optimal for player $i$ to stay by playing $a_i$ unless $u_i(a')>u_i(a_0)$. That is, if $i$ stays at $a$ where $S(x)=a$, then $i$ eventually must benefit from this action due to farsighted rationality and the NHP. If $a_i$ is optimal at $x$, i.e., $\sigma^*_i(x)=a_i$, then $u_i(a')\geq u_i(b)$ because otherwise player $i$ would \textit{not} stay at $a$, changing the reference point $b$. Notice that $i$ can move to another state or play $p$, in which case the minimum pay-off $i$ would receive is $u_i(b)$. This is because (i) if someone else stays at a state different than $b$, then $i$ cannot be harmed, and (ii) if everyone passes, then the outcome would be $b$. But we also have that $u_i(b)> u_i(a_0)$. Therefore, $u_i(a')> u_i(a_0)$.
	
	As a result, it implies that for every player $i'$,  the inequality $u_{i'}(a')>u_{i'}(a_0)$ is satisfied, irrespective of whether $i'$ stays or not along the path of play of $\sigma^*$. This contradicts to our supposition that $a_0$ is Pareto optimal. Therefore, $a_0$ must be the outcome of $\sigma^*$.\qed 
	
	\vspace{0.3cm}
	\noindent \textit{Proof of the second part.}
	Next, we show that for every initial reference point $a_0\in A$, every $k\in \mathbb{N}^+$, and every player function $I$, the associated NHE outcome in game $\Gamma(a_0, k, I)$ is Pareto optimal. In the first part of the proof we already showed that if $a_0$ is Pareto optimal, the associated NHE outcome is Pareto optimal. It is left to show that for an initial reference point $a_0$ that is not Pareto optimal, the NHE associated with $a_0$ must be Pareto optimal. Let $\sigma$ be an NHE from $a_0$, which exists by Theorem~\ref{thm:existence}.
	
	To reach a contradiction, suppose that the outcome of $\sigma$ is $b$, and $b$ is Pareto dominated by some action profile $a\neq b$. We obtain a contradiction in two main steps.
	
	\textbf{Step 1:} Given $\sigma$, we construct an NHE $\sigma^*$ from $a_0$ such that there exists a player $i$ who chooses $b_{i}$ at some node $y$ in $[\sigma^*]$ and makes the reference point, $b\in R(\sigma^*)$, at $y$ the  outcome. Since $b$ is the outcome of $\sigma$,  the outcome of any NHE from $a_0$ must be $b$ by Theorem~\ref{thm:uniqueness}. 
	
	Consider path of the play, $[\sigma]$, of $\sigma$ excluding the terminal node $\overline{[\sigma]}$. Note that whether the active player at the penultimate node in $[\sigma]$ stayed or passed,  $b$ must have been the reference point at some point during the path of the play of $\sigma$. Let $i'$ be the player who makes $b$ the reference point for the first time at some node $y'\in[\sigma^*]$. Let $y\in[\sigma^*]$ be an immediate successor of $y'$ such that $\sigma_{i'}(y')\rightarrow y$ and $i=I(y)$ be the player who moves at $y$. Since the outcome of $\sigma$ is $b$, it must be optimal (i.e., a best response under the constraint of the NHP) for player $i$ to stay at $y$ and make $S(y)=b$ the  outcome. Player $i$'s stay action, $b_i$, is available at node $y$ because $b$ is the reference point for the first time at $y'$. In addition, choosing $b_i$ clearly satisfies the NHP because $b$ is already the reference point at $y$. We then construct $\sigma^*$ such that $\sigma^*_{i}(y)=b_{i}$. If $\sigma_{i}(y)= b_{i}$, then define $\sigma^*=\sigma$. If $\sigma_{i}(y)\neq b_{i}$, then for every player $m$ and every non-terminal node $\hat{x}\in X\setminus \{y\}$ define $\sigma^*_{m}(\hat{x})=\sigma_{m}(\hat{x})$. As desired, we have constructed an NHE $\sigma^*$ from $a_0$ such that there is player $i$ who makes the reference point, $b\in R(\sigma^*)$, at $y$ the  outcome in the first opportunity.

	\textbf{Step 2:} We next show that $\sigma^*$ and hence $\sigma$ cannot actually be an NHE because $\sigma^*_{i}(y)$ cannot be player $i$'s optimal choice at $y$. In other words, player $i$ has a unilateral profitable deviation from $\sigma^*$ and this deviations satisfies the NHP. Notice that if	(i) there exists a path of play from $b$ to $a$ along which the NHP is satisfied, then $\sigma^*_{i}(y)$ cannot be optimal because
	(ii) for every player $m$ (including $i$) $u_m(a)>u_m(b)$, and
	(iii) no other player can stay at an action profile which harms player $i$ along the path because $b$ is the reference point.
	
	We first show (i). Let $b'\in A$ and $a'\in A$ be two action profiles. We first show that for any $k\geq 1$ and any player function $I$, there is always a path of play between $b'$ and $a'$. Let $[b',a']$ be the path of play from $b'$ to $a'$ with the following property. For every node $y'$ in this path of play, the active player at $y'$, $i'=I(y')$, chooses $a_{i'}$ except when $a_{i'}=b_{i'}$, in which case player $i'$ chooses to pass, $p$. Notice that if the players follow this path, then the play would reach to $a'$ from $b'$ in at most $n$ moves. Note that no player stays along the path of play, so no action in the constructed path of play violates the NHP.
	
	Now, let $[b,a]$ be the path of play from $b$ to $a$ constructed as above. We next show that $[b,a]\cap [\sigma^*]=\emptyset$, i.e., the constructed path of play does not overlap with the path of play of $\sigma^*$. In other words, we make sure that $\sigma^*$ does not prescribe players to choose actions at some nodes in $[\sigma^*]$ such that these actions then make the constructed path, $[b,a]$, infeasible due to e.g. $k=1$. 
	
	Let $x \in [b,a]$ be a node such that $S(x)=a$. Then, it must be that $x \notin [\sigma^*]$ because if $x \in [\sigma^*]$, then it would be a unilateral profitable deviation for the active player at $x$ to stay at $a$ and make $a$ the reference point. This is because $u_m(a)>u_m(b)$ and staying at $a$ satisfies the NHP. To see why staying at $a$ satisfies the NHP, suppose (to reach a contradiction) that there exists a player $\hat{i}$ such that $u_{\hat{i}}(a)<u_{\hat{i}}(\hat{a})$, where $\hat{a}$ is the reference point at $x$. But we know that the outcome of $\sigma^*$ is $b$ and that $u_{\hat{i}}(b)<u_{\hat{i}}(a)$, which implies that $u_{\hat{i}}(b)<u_{\hat{i}}(\hat{a})$. Thus, either player $\hat{i}$ harms themself by staying at $b$, or someone else harms $\hat{i}$. It implies that either the farsighted rationality of $\hat{i}$ is violated or the NHP is violated, a contradiction. As a result, if there exists a node $x \in [\sigma^*]$ such that $S(x)=a$, then the active player would stay at $a$, making it the reference point. But if $a$ is the reference point, then $b$ cannot be the outcome of the NHE $\sigma^*$ because every player is strictly better off at $a$. This leads to a contradiction to our supposition that the outcome of $\sigma^*$ is $b$. This establishes that $x \notin [\sigma^*]$.
	
	Let $y\in [b,a]$ be an immediate predecessor of $x$. Then, it must be that $y \notin [\sigma^*]$ because if $y \in [\sigma^*]$, then the player at $y$ would have a unilateral profitable deviation by moving to $a$, anticipating that the next player would stay at  $a$ as shown in the previous paragraph. Note that this deviation would not violate the NHP by construction of $[b,a]$. Next, let $y'\in [b,a]$ be a (not necessarily immediate) predecessor of $y$. By backward induction, notice that $y' \notin [\sigma^*]$ because if $y' \in [\sigma^*]$, then the player $I(y')$ would have a unilateral profitable deviation from $\sigma^*_{I(y')}(y')$ by playing an action that leads to a node $y''\in [b,a]$, where $y''$ is an immediate successor of $y'$. Thus, the paths of play $[b,a]$ and $[\sigma^*]$ have an empty intersection. As a result, statement (i) holds: there is a path of play from $b$ to $a$ such that no player violates the NHP, given the path of play of $\sigma^*$. 
	
	Statement (ii) holds by our supposition that $b$ is Pareto dominated. 
	Statement (iii) holds by definition of the NHP:
	player $i$ has a unilateral profitable deviation at $b$ by playing $a_i$ unless $b_i=a_i$, in which case $i$ has a unilateral profitable deviation by playing $p$, entering the path from $b$ to $a$ as constructed above. By the NHP, no other player in the path can reduce $i$'s pay-off. In addition, no player stays in the path from $b$ to $a$. Thus, player $i$ will eventually receive a strictly greater pay-off by deviating to the constructed path because $u_i(a)>u_i(b)$.
	
	As a result, if the outcome $b$ of NHE $\sigma^*$ is Pareto dominated by some $a$, then player $i$ who stays at $b$ and make $b$ the outcome would have a unilateral profitable deviation from $\sigma^*$. This contradicts to our supposition that the NHE outcome from $a_0$ is $b$. As desired, this implies that the NHE outcome from any $a_0$ must be Pareto optimal.\qed
	\bibliographystyle{chicago}

\bibliography{references}

\begin{thebibliography}{}

\bibitem[\protect\citeauthoryear{Amadae}{Amadae}{2016}]{amadae2016}
Amadae, S.~M. (2016).
\newblock {\em Prisoners of {R}eason: {G}ame {T}heory and {N}eoliberal
  {P}olitical {E}conomy}.
\newblock Cambridge University Press.

\bibitem[\protect\citeauthoryear{Aumann}{Aumann}{1976}]{aumann1976}
Aumann, R.~J. (1976).
\newblock Agreeing to disagree.
\newblock {\em The Annals of Statistics\/}~{\em 4\/}(6), 1236--1239.

\bibitem[\protect\citeauthoryear{Berlin}{Berlin}{1969}]{berlin1969}
Berlin, I. (1969).
\newblock {\em Four {E}ssays on {L}iberty.}
\newblock Oxford University Press.

\bibitem[\protect\citeauthoryear{Bloch and van~den Nouweland}{Bloch and van~den
  Nouweland}{2021}]{bloch2021}
Bloch, F. and A.~van~den Nouweland (2021).
\newblock Myopic and farsighted stable sets in 2-player strategic-form games.
\newblock {\em Games and Economic Behavior\/}~{\em 130}, 663--683.

\bibitem[\protect\citeauthoryear{Brams}{Brams}{1994}]{brams1994}
Brams, S. (1994).
\newblock {\em Theory of Moves}.
\newblock Cambridge, UK: Cambridge University Press.

\bibitem[\protect\citeauthoryear{Brams and Ismail}{Brams and
  Ismail}{2021}]{brams2021}
Brams, S.~J. and M.~S. Ismail (2021).
\newblock Every normal-form game has a {P}areto-optimal nonmyopic equilibrium.
\newblock {\em Theory and Decision\/}, 1--14.

\bibitem[\protect\citeauthoryear{Brams and Wittman}{Brams and
  Wittman}{1981}]{brams1981}
Brams, S.~J. and D.~Wittman (1981).
\newblock Nonmyopic equilibria in 2 $\times$ 2 games.
\newblock {\em Conflict Management and Peace Science\/}~{\em 6\/}(1), 39--62.

\bibitem[\protect\citeauthoryear{Che, Kim, Kojima, and Ryan}{Che
  et~al.}{2020}]{che2020}
Che, Y.-K., J.~Kim, F.~Kojima, and C.~T. Ryan (2020).
\newblock Characterizing {P}areto {O}ptima: {S}equential {U}tilitarian
  {W}elfare {M}aximization. {A}vailable at https://arxiv.org/abs/2008.10819.

\bibitem[\protect\citeauthoryear{Coase}{Coase}{1960}]{coase1960}
Coase, R.~H. (1960).
\newblock {The problem of social cost}.
\newblock {\em The Journal of Law and Economics\/}~{\em 3}, 1--44.

\bibitem[\protect\citeauthoryear{Farrell}{Farrell}{1988}]{farrell1988}
Farrell, J. (1988).
\newblock Communication, coordination and {Nash} equilibrium.
\newblock {\em Economics Letters\/}~{\em 27\/}(3), 209--214.

\bibitem[\protect\citeauthoryear{Flesch, Kuipers, Schoenmakers, and
  Vrieze}{Flesch et~al.}{2010}]{flesch2010}
Flesch, J., J.~Kuipers, G.~Schoenmakers, and K.~Vrieze (2010).
\newblock Subgame perfection in positive recursive games with perfect
  information.
\newblock {\em Mathematics of Operations Research\/}~{\em 35\/}(1), 193--207.

\bibitem[\protect\citeauthoryear{Fukuda and Kamada}{Fukuda and
  Kamada}{2022}]{fukuda2022}
Fukuda, S. and Y.~Kamada (2022).
\newblock {Negotiations with Limited Specifiability}.
\newblock {\em American Economic Journal: Microeconomics\/}~{\em 14\/}(1),
  216--44.

\bibitem[\protect\citeauthoryear{Greenberg}{Greenberg}{1990}]{greenberg1990}
Greenberg, J. (1990).
\newblock {\em The {T}heory of {S}ocial {S}ituations: {A}n {A}lternative
  {G}ame-{T}heoretic {A}pproach}.
\newblock Cambridge University Press.

\bibitem[\protect\citeauthoryear{Hargreaves~Heap}{Hargreaves~Heap}{2016}]{heap2016}
Hargreaves~Heap, S. (2016).
\newblock {Review: Prisoners of Reason: Game Theory and Neoliberal Political
  Economy, by S. M. Amadae}.
\newblock {\em Journal of Economic Literature\/}~{\em 54\/}(4), 1392--1394.

\bibitem[\protect\citeauthoryear{Hobbes}{Hobbes}{1651}]{hobbes1651}
Hobbes, T. (1651).
\newblock {\em {Leviathan Or The Matter, Forme, Amp Power Of A Common-wealth
  Ecclesiasticall And Civill}}.
\newblock London.

\bibitem[\protect\citeauthoryear{Kalai}{Kalai}{1981}]{kalai1981}
Kalai, E. (1981).
\newblock Preplay negotiations and the prisoner's dilemma.
\newblock {\em Mathematical Social Sciences\/}~{\em 1\/}(4), 375--379.

\bibitem[\protect\citeauthoryear{Kilgour}{Kilgour}{1984}]{kilgour1984}
Kilgour, D.~M. (1984).
\newblock Equilibria for far-sighted players.
\newblock {\em Theory and Decision\/}~{\em 16\/}(2), 135--157.

\bibitem[\protect\citeauthoryear{Koray and Yildiz}{Koray and
  Yildiz}{2018}]{koray2018}
Koray, S. and K.~Yildiz (2018).
\newblock Implementation via rights structures.
\newblock {\em Journal of Economic Theory\/}~{\em 176}, 479--502.

\bibitem[\protect\citeauthoryear{Lewis}{Lewis}{1969}]{lewis1969}
Lewis, D. (1969).
\newblock {\em Convention: {A} {P}hilosophical {S}tudy}.
\newblock Harvard University Press.

\bibitem[\protect\citeauthoryear{Lombardi, Miyagishima, and Veneziani}{Lombardi
  et~al.}{2016}]{lombardi2016}
Lombardi, M., K.~Miyagishima, and R.~Veneziani (2016).
\newblock {Liberal egalitarianism and the harm principle}.
\newblock {\em The Economic Journal\/}~{\em 126\/}(597), 2173--2196.

\bibitem[\protect\citeauthoryear{Mariotti and Veneziani}{Mariotti and
  Veneziani}{2009}]{mariotti2009}
Mariotti, M. and R.~Veneziani (2009).
\newblock {`Non-interference' implies equality}.
\newblock {\em Social Choice and Welfare\/}~{\em 32\/}(1), 123--128.

\bibitem[\protect\citeauthoryear{Mariotti and Veneziani}{Mariotti and
  Veneziani}{2013}]{mariotti2013}
Mariotti, M. and R.~Veneziani (2013).
\newblock {On the impossibility of complete non-interference in Paretian social
  judgements}.
\newblock {\em Journal of Economic Theory\/}~{\em 148\/}(4), 1689--1699.

\bibitem[\protect\citeauthoryear{Mariotti and Veneziani}{Mariotti and
  Veneziani}{2020}]{mariotti2020}
Mariotti, M. and R.~Veneziani (2020).
\newblock {The Liberal Ethics of Non-Interference}.
\newblock {\em British Journal of Political Science\/}~{\em 50\/}(2),
  567–584.

\bibitem[\protect\citeauthoryear{Marschak and Selten}{Marschak and
  Selten}{1978}]{marschak1978}
Marschak, T. and R.~Selten (1978).
\newblock Restabilizing responses, inertia supergames, and oligopolistic
  equilibria.
\newblock {\em The Quarterly Journal of Economics\/}, 71--93.

\bibitem[\protect\citeauthoryear{Mill}{Mill}{1859}]{mill1859}
Mill, J.~S. (1859).
\newblock {\em On Liberty}.
\newblock John W. Parker and Son: London.

\bibitem[\protect\citeauthoryear{Moulin and Peleg}{Moulin and
  Peleg}{1982}]{moulin1982}
Moulin, H. and B.~Peleg (1982).
\newblock Cores of effectivity functions and implementation theory.
\newblock {\em Journal of Mathematical Economics\/}~{\em 10\/}(1), 115--145.

\bibitem[\protect\citeauthoryear{Nash}{Nash}{1951}]{nash1951}
Nash, J. (1951).
\newblock Non-{Cooperative} {Games}.
\newblock {\em The Annals of Mathematics\/}~{\em 54\/}(2), 286--295.

\bibitem[\protect\citeauthoryear{O'Neill}{O'Neill}{1989}]{oneill1989}
O'Neill, O. (1989).
\newblock {\em {Constructions of Reason: Explorations of Kant's Practical
  Philosophy}}.
\newblock Cambridge University Press.

\bibitem[\protect\citeauthoryear{Osborne and Rubinstein}{Osborne and
  Rubinstein}{1994}]{osborne1994}
Osborne, M.~J. and A.~Rubinstein (1994).
\newblock {\em {A Course in Game Theory}}.
\newblock {MIT} {P}ress.

\bibitem[\protect\citeauthoryear{Ray and Vohra}{Ray and Vohra}{2015}]{ray2015}
Ray, D. and R.~Vohra (2015, May).
\newblock The {Farsighted} {Stable} {Set}: {The} {Farsighted} {Stable} {Set}.
\newblock {\em Econometrica\/}~{\em 83\/}(3), 977--1011.

\bibitem[\protect\citeauthoryear{Ray and Vohra}{Ray and Vohra}{2020}]{ray2020}
Ray, D. and R.~Vohra (2020).
\newblock Games of love and hate.
\newblock {\em Journal of Political Economy\/}~{\em 128\/}(5), 1789--1825.

\bibitem[\protect\citeauthoryear{Roemer}{Roemer}{2010}]{roemer2010}
Roemer, J.~E. (2010).
\newblock Kantian equilibrium.
\newblock {\em Scandinavian Journal of Economics\/}~{\em 112\/}(1), 1--24.

\bibitem[\protect\citeauthoryear{Rubinstein}{Rubinstein}{1982}]{rubinstein1982}
Rubinstein, A. (1982).
\newblock Perfect {Equilibrium} in a {Bargaining} {Model}.
\newblock {\em Econometrica\/}~{\em 50\/}(1), 97--109.

\bibitem[\protect\citeauthoryear{Santos}{Santos}{2000}]{santos2000}
Santos, V. (2000).
\newblock Alternating-announcements cheap talk.
\newblock {\em Journal of Economic Behavior \& Organization\/}~{\em 42\/}(3),
  405--416.

\bibitem[\protect\citeauthoryear{Selten}{Selten}{1965}]{selten1965}
Selten, R. (1965).
\newblock Spieltheoretische {B}ehandlung eines {O}ligopolmodells mit
  {N}achfragetr{\"a}gheit.
\newblock {\em Zeitschrift f{\"u}r die gesamte {S}taatswissenschaft\/}.

\bibitem[\protect\citeauthoryear{Sen}{Sen}{1970}]{sen1970}
Sen, A. (1970).
\newblock {The impossibility of a Paretian liberal}.
\newblock {\em Journal of Political Economy\/}~{\em 78\/}(1), 152--157.

\bibitem[\protect\citeauthoryear{Shapley}{Shapley}{1953}]{shapley1953}
Shapley, L.~S. (1953).
\newblock Stochastic games.
\newblock {\em Proceedings of the {N}ational {A}cademy of {S}ciences\/}~{\em
  39\/}(10), 1095--1100.

\bibitem[\protect\citeauthoryear{Solan and Vieille}{Solan and
  Vieille}{2003}]{solan2003}
Solan, E. and N.~Vieille (2003).
\newblock Deterministic multi-player {Dynkin} games.
\newblock {\em Journal of Mathematical Economics\/}~{\em 39\/}(8), 911--929.

\bibitem[\protect\citeauthoryear{von Neumann and Morgenstern}{von Neumann and
  Morgenstern}{1944}]{neumann1944}
von Neumann, J. and O.~Morgenstern (1944).
\newblock {\em Theory of Games and Economic Behavior\/} (1953, 3rd ed.).
\newblock Princeton: Princeton University Press.

\end{thebibliography}

\end{document}